\newtheorem{theorem}{Theorem}
\newtheorem{lemma}{Lemma}
\newtheorem{remark}{Remark}
\newcommand{\e}[1]{{\mathbb E}\left[ #1 \right]}
\newcommand{\etal}{\textit{et al}. }
\def\BibTeX{{\rm B\kern-.05em{\sc i\kern-.025em b}\kern-.08em
    T\kern-.1667em\lower.7ex\hbox{E}\kern-.125emX}}
\begin{document}

\title{On the Performance of Pipelined HotStuff}
\author{
\IEEEauthorblockN{Jianyu Niu, Fangyu Gai, Mohammad M. Jalalzai, Chen Feng}
\IEEEauthorblockA{School of Engineering, University of British Columbia (Okanagan Campus)}
\IEEEauthorblockA{{\{jianyu.niu, fangyu.gai, m.jalalzai, chen.feng\}@ubc.ca}}
}

\maketitle

\begin{abstract}
HotStuff is a state-of-the-art Byzantine fault-tolerant consensus protocol. It can be pipelined to build large-scale blockchains. 
One of its variants called LibraBFT is adopted in Facebook's Libra blockchain. Although it is well known that pipelined HotStuff is secure against up to $1/3$ of Byzantine nodes, its performance in terms of throughput and delay is still under-explored.
In this paper, we develop a multi-metric evaluation framework to quantitatively analyze pipelined \mbox{HotStuff's performance} with respect to its chain growth rate, chain quality, and latency. 
We then propose two attack strategies and evaluate their effects on the performance of pipelined HotStuff.
Our analysis shows that the chain growth rate (resp, chain quality) of pipelined HotStuff under our attacks can drop to as low as $4/9$ (resp, $12/17$) of that without attacks when $1/3$ nodes are Byzantine.
As another application, we use our framework to evaluate certain engineering optimizations adopted by LibraBFT. We find that these optimizations make the system more vulnerable to our attacks than the original pipelined HotStuff. 
Finally, we provide two countermeasures to thwart these attacks.
We hope that our studies can shed light on the rigorous understanding of the state-of-the-art pipelined HotStuff protocol as well as its variants.
\end{abstract}


\section{Introduction} \label{sec:intro}
In 2008, Nakamoto invented the concept of {blockchain}, a mechanism to maintain a distributed ledger for the cryptocurrency Bitcoin \cite{nakamoto2012bitcoin}.
The core novelty behind blockchain is Nakamoto Consensus (NC), an unconventional (at that time) synchronous Byzantine fault-tolerant (BFT) consensus \cite{garay2015,Pass2017}. 
Despite the huge impact of Bitcoin, NC suffers from long confirmation latency and low transaction throughput, both of which hinder the original blockchain to support Internet-scale applications. 
For example, Bitcoin today can only process up to seven transactions per second with a confirmation latency of hours.
On one hand, the long latency is a result of the probabilistic safety guarantee and no finality: a short latency cannot guarantee high confidence that a transaction has been confirmed.
On the other hand,  the low throughout is mainly due to the speed-security tradeoff: a higher transaction throughput leads to more severe forking, which greatly reduces the honest computation power against adversaries, making the system less secure~\cite{ghost}.

One promising approach to addressing these dilemmas is leveraging the classical BFT consensus~\cite{lamport1980,pbft1999}, which is also referred to as BFT state machine replication (SMR) and has been extensively studied for the last few decades. 
Unlike NC, classical BFT protocols can provide a strong safety guarantee. 
That is, once a transaction is confirmed, it will stay there forever.
Hence, clients do not need long waiting periods to confirm transactions (which means a shorter transaction latency), and transaction processing does not need to be compromised with security (which implies a higher throughput).
For example, experiments have demonstrated that PBFT~\cite{pbft1999}, a pioneer BFT protocol, can proceed tens of thousands of transactions per second in a LAN~\cite{Bessani2014} and have only hundreds of milliseconds latency in a WAN~\cite{Sousa2015}.
Despite all of these advantages, it is technically challenging to apply classical BFT protocols in a blockchain setup.
First, the classical BFT protocols have a high message complexity (e.g., $O(n^2)$ message complexity for committing one block), so the number of participants is usually less than dozens~\cite{QU,700BFT}. In other words, classical BFT protocols suffer from scalability issues and cannot support a large-scale blockchain. 
Second, classical BFT protocols are notoriously difficult to be developed, tested, and proved~\cite{mickens2014saddest, 700BFT, QuestMarko}.
Finally, classical BFT protocols rarely consider fairness among leaders\footnote{
Blockchain systems usually reward leaders with some self-issued tokens for incentivizing protocol participation~\cite{nakamoto2012bitcoin}. Hence, leadership fairness is the foundation of such an incentive mechanism to fairly reward nodes.}. 
In most leader-based BFT protocols~\cite{pbft1999}, a node can serve as a leader as long as it behaves well. This is also called stability-favoring leader rotation~\cite{chan2018pala}, for this mechanism can avoid the $O(n^3)$ message complexity in the leader rotation.

HotStuff proposed by Yin et al.~\cite{HotStuffYin2019} is a state-of-the-art BFT consensus, {which leverages the community's advances in the last several decades and achieves the strong scalability, prominent simplicity, and good practicability for large-scale applications like blockchains.}
HotStuff creatively adopts a three-phase commit rule (rather than the two-phase commit rule used in classical BFT~\cite{pbft1999}) to enable the protocol to reach consensus at the pace of actual network delay\footnote{This property is called responsiveness in the literature.} and leverages the threshold signature to realize linear message complexity.
HotStuff can be further pipelined, which enables a frequent leader rotation and leads to a simple and practical approach to building large-scale blockchains.
Due to these salient properties, Facebook adopts a variant of pipelined HotStuff called LibraBFT~\cite{banostate} for its global payment system, Libra blockchain, {which aims to thrive fintech innovations and to enable billions of consumers and businesses to conduct instantaneous, low-cost, highly secure transactions.}\footnote{Pipelined HotStuff is also referred to as chained HotStuff~\cite{HotStuffYin2019}. } 
In addition, Dapper Lab describes how to deploy HotStuff in its Flow platform~\cite{Hentschel2020FlowSC}, and Cypherium Blockchain\cite{cypherium} combines HotStuff with NC together to build a permissionless blockchain. 
Although it is well known that pipelined HotStuff is secure against up to $1/3$ of Byzantine nodes, its performance in terms of throughput and delay is still under-explored.

In this paper, we first develop a multi-metric evaluation framework to quantitatively analyze pipelined HotStuff's performance with respect to its chain growth rate, chain quality, and latency. 
We then propose several attack strategies and evaluate their effects on the performance by using our framework.
In addition, we use our framework to evaluate some engineering optimizations adopted by LibraBFT. We find that these optimizations make the system more vulnerable to certain attacks compared with the original pipelined HotStuff. Finally, we provide two countermeasures to thwart these attacks. We hope that our studies can shed light on the rigorous understanding of the state-of-the-art pipelined HotStuff protocol as well as its variants. Our contributions can be summarized as follows:

\begin{itemize}[leftmargin=*]
    \item We develop a multi-metric evaluation framework and leverage it to evaluate the impact of several new attacks. Our analysis shows that the chain growth rate (resp, chain quality) of pipelined HotStuff under these attacks can drop to $4/9$ (resp, $12/17$) of that without attacks when $1/3$ nodes are Byzantine.
    
    \item We use our framework to evaluate some engineering optimizations adopted by LibraBFT. We find that in pipelined HotStuff, an adversary controlling $1/3$ corrupted nodes can increase the latency to $8.33$ rounds on expectation ($2.7$x times of the latency without attacks), however, with the same condition, the adversary in LibraBFT can increase the latency to $10.25$ rounds. 
    
    \item We propose two countermeasures against our attacks, which can reduce the latency by $3$ rounds, improve the chain growth rate by $1.5$x times, and chain quality by $1.2$x times.
    
    \item We develop a proof-of-concept implementation of pipelined HotStuff to validate our theoretical findings. 
\end{itemize}

\section{System Model and Preliminaries} \label{sec:model}

\subsection{System Model}
We consider a system with $n$ nodes denoted by the set $\mathcal{N}$. 
We assume a public-key infrastructure (PKI), and each node has a pair of keys for signing messages (e.g., blocks and votes).
We assume that a subset of $f$ nodes is \emph{Byzantine}, denoted by the set $\mathcal{F}$, and can behave arbitrarily. 
The other nodes in $\mathcal{N} \setminus \mathcal{F}$ are honest and strictly follow the protocol. 
In order to ensure security, we have $n \ge 3f + 1$.
We use $\alpha$ (resp. $\beta$) to denote the fraction of Byzantine (resp. honest) nodes. That is, $\alpha = f/n$.
For simplicity, all the Byzantine nodes are assumed to be controlled by a single adversary, which is computationally bounded and cannot (except with negligible probability) forge honest nodes' messages. 

We assume honest nodes are fully and reliably connected, i.e., every pair of honest nodes is connected with an authenticated and reliable communication link.
We adopt the partial synchrony model of Dwork \etal \cite{dwork1988consensus}.
In the model, there is a known bound $\Delta$ and an unknown Global Stabilization Time (\textsf{GST}), such that after \textsf{GST}, all message transmissions between two honest nodes arrive within a bound $\Delta$. 
Hence, the system is running in \emph{synchronous} mode after \textsf{GST} and \emph{asynchronous} mode if \textsf{GST} never occurs. 

\subsection{Preliminaries} \label{subsec:preliminaries}
\noindent \textbf{Quorum Certificate.}  A block's quorum certificate (QC) is proof that more than $2n/3$ nodes (out of $n$) have signed this block. Here, a QC could be implemented as a simple set of individual signatures or a threshold signature.
We say a block is certified when its QC is received and certified blocks' freshness is ranked by their round numbers.
In particular, we refer to a certified block with the highest round number that a node knows as the \emph{newest} certified block. 
Each node keeps track of all signatures for all blocks and keeps updating the newest certified block to its knowledge.

\noindent \textbf{Block and Block Tree.} Clients send transactions to leaders, who then batch transactions into blocks. 
A block has four-tuple $\left \langle round, cmd, parent\_qc, \sigma \right \rangle$, where $round$ denotes the round number at which the block is proposed, $cmd$ is a batch of transactions, $parent\_qc$ is the QC for the parent block, and $\sigma$ is the block owner's signature of $\left \langle round, cmd, parent\_qc\right \rangle$. Every block except the genesis block must specify its parent block and include a QC for the parent block. In this way, blocks are chained. 
(Note that in Bitcoin~\cite{nakamoto2012bitcoin}, blocks are chained through hash references rather than QCs.)
As there may be forks, each node maintains a block tree (referred to as $blockTree$) of received blocks. 
\begin{figure}[t]
\centering
\includegraphics[width=3.2in]{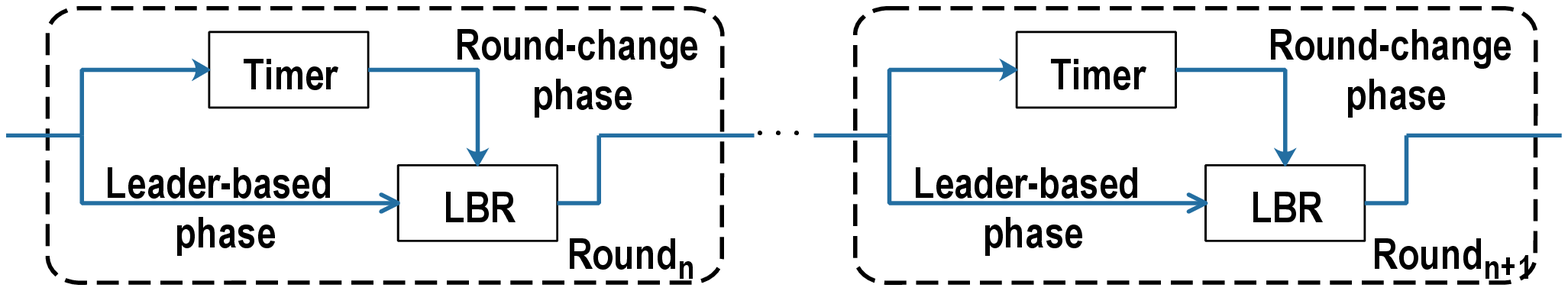}
\caption{\textbf{Overview of a sequence of the leader-based round (LBR) instances.} The leader-base phase is in charge of driving progress, while the round-change phase is to synchronize nodes to the same round.}
\vspace{-2mm}
\label{fig:view}
\end{figure}

\subsection{Leader-based Round Abstraction} 
Pipelined HotStuff is executed into a sequence of rounds, and each round has a designated leader\footnote{Rounds are also referred to as views~\cite{pbft1999, HotStuffYin2019}, terms~\cite{raft2014}, instance values/epoch~\cite{ZooKeepper} or ballot numbers in the literature}. 
Each round can be further divided into two phases: $1$) leader-based phase in which a designated leader proposes a new block and collects votes from other nodes to form a quorum certificate of this block (introduced shortly), and $2$) round-change phase in which nodes safely \emph{wedge} to next round if the current-round leader is faulty or no certified block is generated before the timeout, as shown in Fig.~\ref{fig:view}.
By following recent work~\cite{spiegelman2019ace}, we assume that each round can be encapsulated in a leader-based round (LBR) abstraction, which provides two important modules: pacemaker and leader-election modules.
The pacemaker module can guarantee that honest validators are synchronized to execute the same rounds for sufficient overlap, and leaders propose a block that will be supported by honest nodes when the network is synchronous~\cite{HotStuffYin2019}.
That is, an honest leader can send its block proposal to all the other honest nodes and receive their votes in one round.
The leader-election module can guarantee that nodes are \emph{fairly} elected as leaders.
That is, during synchronous periods, each node has the same chance to win the leadership for one round\footnote{During asynchronous periods, some honest nodes may not be synchronized to the highest round and participate in the leader-election. Thus, the adversary can have a higher chance than $\alpha$ to be elected as leaders.}. For convenience, a leader who is elected from honest nodes (resp. Byzantine nodes) is referred to as \emph{honest} (resp. \emph{adversarial}) leader. In the same way, a block proposed by an honest (resp. adversarial) leader is referred to as \emph{honest} (resp. \emph{adversarial}) block.

\section{Pipelined HotStuff and LibraBFT Algorithms} \label{sec:algorithm}
\subsection{Pipelined HotStuff Algorithm} \label{subsec:hotstuffalgorithm}
We describe the leader-based phase of pipelined HotStuff at round $i$. 
In the beginning, a unique leader, called $leader_i$, is randomly elected by the leader-election module. The leader's identity is known and can be verified by all nodes (see Sec.~\ref{subsec:preliminaries}).
Then, the leader {\bf proposes} a block to extend the \emph{newest} certified block it has seen\footnote{For simplicity, we follow the same way with LibraBFT, i.e., leaders extend the predecessor block with a direct child. However, in pipelined HotStuff, a leader has to include dummy blocks in its proposal if there are no certified blocks generated in previous rounds.}, and broadcasts this block to all other nodes. Every node {\bf votes} for the first block it receives from the leader\footnote{Recall that an adversarial leader can propose multiple blocks.}, as long as the block satisfies certain conditions introduced shortly. A vote for a block is a signature on the block.
When the leader receives at least $2n/3$ unique signatures (including its own signature), it aggregates these signatures into a QC and sends it to the next-round leader, namely, $leader_{i+1}$. 

\begin{figure}[t]
\centering
\includegraphics[width=2.5in]{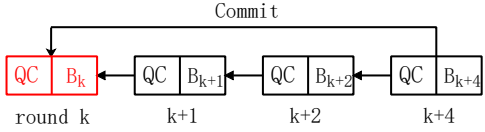}
\caption{\textbf{The chained blocks in pipelined HotStuff.} Blocks $B_k$, $B_{k+1}$, and $B_{k+2}$ are three consecutive blocks, and nodes will commit block $B_k$ when receiving block $B_{k+4}$.}
\vspace{-2mm}
\label{fig:commit}
\end{figure}

We are now ready to describe the condition for voting. Every node maintains two local parameters: $i$) $last\_voted\_round$, the last round for which the node has voted, and $ii$) $locked\_round$, the highest known round number of the grand-parent block that has been received.
For example, in Fig.~\ref{fig:commit}, when a node first receives the block $B_{k+2}$ and votes for it, its $last\_voted\_round$ is updated to $k+2$, and the newest grandparent block is $B_k$, and so $locked\_round$ is $k$. 
After receiving block $B_{k+4}$, the node updates $last\_voted\_round$ to $k+4$ and $locked\_round$ to $k+1$.
A node will vote for the first block proposed by the current-round leader if the block extends the block generated at $locked\_round$ (regardless of round numbers) or the round number of the received block's parent block is greater than $locked\_round$. 
Concretely, a block satisfies the above condition if and only if: $i$) its round number is greater than $last\_voted\_round$, and $ii$) the round number of its parent block is greater than or equal to $locked\_round$.\footnote{This condition is adopted by LibraBFT and is shown to be equivalent to the previous condition~\cite{banostate, bano2020twins}.}

After voting for a new block, a node will insert the block to its $blockTree$ and then update its state as follows: $i$)update $last\_voted\_round$ to the round number of this block, and $ii$) update the node’s $locked\_round$ to the round number of this block' grandparent block if the latter is higher.
Meanwhile, the node checks whether there are new committed blocks in its $blockTree$. Specifically,
if there are three blocks $B_k$, $B_{k+1}$ and $B_{k+2}$ proposed in three consecutive rounds $k$, $k+1$, and $k+2$, and an additional block extends block $B_{k+2}$, the node will {\bf commit} block $B_k$ and all its predecessor blocks. The first three consecutive blocks are referred to as $3$-direct chain\footnote{A $3$-direct chain requires an additional block extending $3$ consecutive blocks. If we only have $3$ consecutive blocks, we don't call them a $3$-direct chain. Note that for simplicity of analysis, the $3$-direct chain is different from the Three-Chain defined in ~\cite{HotStuffYin2019}, but the differences do not affect the results.}.
A simple case, in which block $B_k$ is committed, is shown in Fig.~\ref{fig:commit}. 
Note that nodes maintain a chain containing committed blocks, and this chain is referred to as the \emph{main} chain in our later analysis.

To sum up, in pipelined HotStuff, leaders propose new blocks, and every node votes for a new block according to certain voting conditions and sometimes commits blocks if there is a $3$-direct chain followed by another block. Also, every node starts a timer to track progress for each round.
Whenever timeout happens or a block's QC is received, a node moves to the next round. 
Such a round synchronization procedure is provided by the pacemaker module.
In fact, the pacemaker module can also guarantee a new leader to have the newest certified block and/or its parent block, which guarantees the leader can propose a block voted by honest nodes (see Sec.~\ref{subsec:preliminaries}).

\subsection{LibraBFT Algorithm} \label{subsec:librabft}
LibraBFT is a variant of pipelined HotStuff with two subtle differences.
\emph{First}, in LibraBFT, a node sends its vote directly to the next-round leader (rather than the current-round leader) so that the next leader can form a QC and embed the QC in its own block. In this way, the current leader doesn't need to relay the QC to the next leader. That is, 
this optimization can cutoff the delay of the relay operation.\footnote{The latest version of HotStuff also adopted this optimization~\cite{hotstuffLatest}.}
\emph{Second}, LibraBFT introduces a new block type called Nil block. 
This is, when the timer expires, and nodes have not received a proposal for the round, they can broadcast a vote on a Nil block (in a predetermined format).
If more than $2n/3$ nodes have voted, the aggregated signatures can serve as a QC for the Nil block.
The certified Nil block can guarantee that blocks are produced in consecutive rounds despite having faulty leaders, which can further accelerate the block commitment.
For example, in Fig.~\ref{fig:commit}, when the leader of round $k+3$ is faulty, and there is no block produced, nodes cannot commit block $B_{k+1}$ even if receiving block $B_{k+4}$ in pipelined HotStuff. By contrast, in LibraBFT, there will be a certified Nil block at round $k+3$, and nodes will commit block $B_{k+1}$ after receiving block $B_{k+4}$.

\section{Performance Metrics and Attacks}
In this section, we first introduce a multi-metric framework to evaluate the impact of various attacks and then propose several attack strategies. 
\subsection{Performance Metrics}
We focus on three performance metrics, namely, chain growth rate, chain quality, and latency. All of these metrics are meaningful only after
the \textsf{GST}, which implies that the network is in synchronous mode\footnote{Before the \textsf{GST}, there may have no certified blocks at all, and so the three metrics become meaningless.}.

\subsubsection{Chain Growth Rate}
For a given adversarial strategy that controls a fraction $\alpha$ of total nodes, the chain growth rate $u_{1}(\alpha)$ is defined as the rate of honest blocks appended to the main chain over the long run. Let $B_h(m)$ denote the total number of honest blocks appended to the main chain in $m$ rounds. (Note that $B_h(m)$ can be a random variable because of the randomness in the leader selection.) We have:
\begin{equation}
    u_{1}(\alpha) = {\lim_{m \rightarrow \infty}{\frac{B_h(m)}{m}}}.
\end{equation}
The chain growth rate corresponds to the liveness in the context of blockchains.

\subsubsection{Chain Quality} 
For a given adversarial strategy that controls a fraction $\alpha$ of total nodes, the chain quality $u_{2}(\alpha)$ is defined as the fraction of honest blocks included in the main chain over the long run. Let $B_a(m)$ denote the total number of adversarial blocks appended to the main chain in $m$ rounds. We have:
\begin{equation}
    u_{2}(\alpha) = {\lim_{m \rightarrow \infty}{\frac{B_h(m)}{B_h(m) + B_a(m)}}}.
\end{equation}
This metric affects the reward distribution.
In blockchains, each block in the main chain brings its proposer a reward~\cite{nakamoto2012bitcoin}.
This reward incentivizes nodes to participate in the consensus and compete to win the leadership.
Additionally, nodes are expected to get rewards, proportional to their devoted resources (e.g., hash power in Poof-of-Work~\cite{nakamoto2012bitcoin}, and stakes in Proof-of-Stake~\cite{Gilad2017}). 
Intuitively, a chain quality less than $1-\alpha$ implies that the adversary can win a higher fraction of rewards than what it deserves, which ruins the incentive compatibility~\cite{Pass2017fruit, eyal2014majority, niu2019selfish, niu2020incentive}. 

\subsubsection{Latency} For a given adversarial strategy that controls a fraction $\alpha$ of total nodes, the latency $u_{3}(\alpha)$ is defined as the average rounds that honest blocks take from being included in the main chain until being committed over the long run. Let $D_i$ denote the number of rounds that the $i$-th honest block takes to be committed by all honest nodes (rather than some honest leaders) during the $m$ rounds.\footnote{Note that leaders first commit blocks locally, and then send out the proofs of the $3$-direct chain to convince other nodes to commit these blocks. 
It is trivial to extend our analysis to get the results under this model.} 
We have:
\begin{equation}
    u_{3}(\alpha) = {\lim_{m \rightarrow \infty}{\frac{\sum_{i=1}^{B_h(m)}{D_i}}{B_h(m)}}}.
\end{equation}

\begin{remark}
Note that the chain growth rate and latency are measured in terms of \emph{rounds} rather than in time. This round abstraction allows us to ignore the specific implementation of the pacemaker module and focus on the core of pipelined HotStuff.
\end{remark}

\subsection{Attack Strategies} \label{subsec:attacks}
We introduce two attacks here. The forking attack launched by the adversary aims to minimize the chain growth rate and the chain quality by overriding honest blocks. The delay attack aims to maximize the latency by delaying the commitment of honest blocks. Both attacks, which are inspired by the selfish mining attack for Bitcoin, are new in the context of pipelined HotStuff. The optimality of these attacks will be discussed in a journal version of this work. Here, we emphasize that since the performance metrics are measured after the \textsf{GST}, we do not need to consider network-level attacks, such as eclipse attack\cite{eclipse2015} and Distributed Denial of Service (DDoS) attack, which may cause a network partition (i.e., asynchronous network) and blocking the progress.

\begin{figure}[t]
\centering
\includegraphics[width=2.2in]{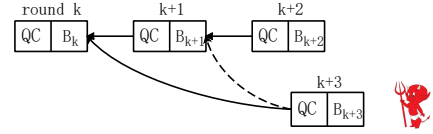}
\caption{\textbf{The forking attack on pipelined HotStuff.} The adversary is elected as a leader in round $k+3$. It proposes a block $B_{k+3}$ after block $B_{k}$ (or $B_{k+1}$) to override blocks $B_{k+1}$ and $B_{k+2}$ (or block $B_{k+2}$).}
\vspace{-2mm}
\label{fig:overriding}
\end{figure}
\subsubsection{Forking Attack}
In pipelined HotStuff, an adversarial leader can create forking blocks on purpose to override honest blocks without any loss. 
For example, in Fig.~\ref{fig:overriding}, if blocks $B_{k}$ and $B_{k+1}$ are both honest blocks, and the adversarial leader at round $k+3$ has no adversarial certified block with a round number larger than $locked\_round = k$, the adversarial leader will build a block on block $B_k$.
As block $B_{k+3}$ satisfies the voting condition (i.e., the round number of $B_{k+3}$'s parent block is no less than honest nodes' $locked\_round = k$), nodes will vote for $B_{k+3}$ and all subsequent leaders will extend $B_{k+3}$.
Similarly, if only block $B_{k+2}$ is an honest block, the adversary can build on block $B_{k+1}$  to override this block.
In both cases, the adversarial leader overrides some honest blocks and suffers no loss of adversarial blocks. Also, note that the adversarial leader cannot override block $B_k$ and its predecessor blocks, since block $B_{k+3}$ cannot reference a certified parent block, which has a round number no less than $k$.
In general, if there exist some adversarial certified blocks with round numbers no less than $locked\_round$, the adversarial leader extends the newest adversarial certified block. Otherwise, the adversarial leader extends the block produced at $locked\_round$. 

\subsubsection{Delay Attack} \label{subsec:delayattack}
The main goal of the delay attack is to break the block commitment condition in order to increase the average delay of honest blocks in the main chain. 
More specifically, the delay attack is to prevent honest blocks to form the $3$-direct chain.
Recall that a block is committed if and only if three blocks extend it, and the first two blocks are produced in consecutive rounds after the block's round (i.e., the $3$-direct chain structure). 
Note that once the block is committed, all its predecessor blocks are also committed. 

\emph{Delay Attack in pipelined HotStuff.} Recall that an honest leader always proposes a block on the newest certified block. By contrast,
an adversarial leader can propose a block on a \emph{non-newest} certified block or propose \emph{no} block at all.
An example is provided in Case A of Fig.~\ref{fig:delayAttack} in which the adversarial leader of round $k+4$ observes three non-consecutive blocks
$B_{k}$, $B_{k+2}$, and $B_{k+3}$. In this case, the leader proposes no block at all (leading to a timeout). As a result, subsequent honest leaders have to restart building a $3$-direct chain.
Another example is illustrated in Case B of Fig.~\ref{fig:delayAttack}
in which the adversarial leader of round $k+4$ observes three consecutive blocks $B_{k+1}$, $B_{k+2}$, and $B_{k+3}$. In this case, the leader
proposes block $B_{k+4}$ on top of $B_{k+2}$. This will override block $B_{k+3}$ as explained before.
In general, if there exist three consecutive blocks that ended with the newest certified block, the subsequent adversarial leader overrides the newest certified block. Otherwise, the adversarial leader proposes no block.
\begin{figure}[t]
\centering
\includegraphics[width=2.2in]{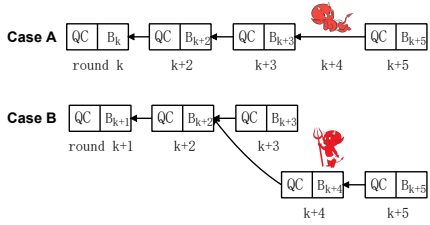}
\caption{\textbf{The delay attack on pipelined HotStuff.} The adversary proposes no block at all (in Case A) or overrides honest blocks at round $k+4$ (in Case B) according to whether there exists three consecutive blocks.}
\vspace{-2mm}
\label{fig:delayAttack}
\end{figure}

\emph{Delay Attack in LibraBFT.}  
Delay attack in LibraBFT is slightly different from that in pipelined HotStuff.
First, due to the Nil block, the adversary cannot propose no block, because otherwise a certified Nil block will be produced that can be part of a $3$-direct chain. 
Hence, the adversarial leader can create a block, and then send this block to half of the honest nodes. 
In this way, half of the honest nodes will vote for this block, while the left half honest nodes will vote for the Nil block.
As a result, neither a certified block nor a certified Nil block will be produced in this round.
Second, in LibraBFT, as nodes send block votes to the next leader, an adversarial leader can hide the collected QC of a block proposed in the previous round. So, when there are three consecutive blocks, the adversarial leader can hide the QC for the last one. In this way, subsequent honest leaders cannot form a $3$-direct chain based on the three consecutive blocks.

\section{Performance Analysis under Forking and Delay Attacks} \label{sec:analysis}
In this section, we first analyze the performance of pipelined HotStuff in terms of chain growth rate, chain quality, and latency under forking and delay attacks. 
Then, we evaluate some optimizations adopted by LibraBFT.  
Specifically, we consider a sequence of $m$ rounds (when the network is in synchronous mode) and number these rounds as $1, 2, ..., m$. 
For each round, the possibility that the elected leader is honest (resp, adversarial) is $\beta$ (resp, $\alpha$). 
Now, Let $X_j$ ($j \in [1, m]$) denote an indicator random variable which equals one if the leader of the $j$th round is honest and equals zero otherwise. 
As the network is synchronous, nodes can receive a block within $\Delta$ time after an honest leader sends the block. 
For simplicity, nodes are assumed to receive the block by the end of each round. 
In addition, honest leaders are assumed to be able to get the newest honest certified blocks from other nodes before proposing new blocks.

\subsection{Performance Analysis of Pipelined HotStuff }
\subsubsection{Chain Growth Rate}
Recall that an honest block proposed at round $i$ will be overridden by an adversarial leader in round $(i+1)$ or $(i+2)$ under the forking attack (See Sec.~\ref{subsec:attacks} for details.). 
In other words, an honest block can be kept in the main chain if and only if the subsequent two blocks are honest blocks.
Let $Z_i$ denote an indicator random variable, which equals to one if $\{ X_i = 1, X_{i+1} = 1, X_{i+2} = 1\}$ and equals to zero otherwise. 
Next, let $Z = \sum_{i=1}^{m}Z_i$. The following lemma bounds the value of $Z$.

\begin{lemma}\label{lem:keylemma1}
For $m$ consecutive rounds, the number of block fragments $(X_i, X_{i+1}, X_{i+2}) = (1,1,1)$ has the following Chernoff-type bound: For $0 < \delta < 1$,
\begin{equation}
   \Pr( | Z - \beta^3 m| > \delta \beta^3 m ) < e^{-\Omega\left(\delta^2 \beta^3  m\right)}. 
\end{equation}
\end{lemma}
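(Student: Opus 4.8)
The plan is to prove concentration of $Z=\sum_{i=1}^m Z_i$ around its mean by exploiting the fact that the $Z_i$ have only \emph{local} dependence. First I would record the mean: since $X_i,X_{i+1},X_{i+2}$ are independent Bernoulli random variables each equal to one with probability $\beta$, we have $\e{Z_i}=\beta^3$ and hence $\e{Z}=\beta^3 m$, up to the two boundary terms $Z_{m-1},Z_m$ that read undefined coordinates $X_{m+1},X_{m+2}$; these contribute only an $O(1)$ shift that I would absorb into the $\Omega(\cdot)$ and otherwise ignore. The obstacle is that the summands $Z_i$ are \emph{not} independent — $Z_i$ and $Z_{i+1}$ both read the variables $X_{i+1},X_{i+2}$ — so the textbook Chernoff bound does not apply directly to $Z$.

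The key observation is that the dependence has range at most two: $Z_i$ and $Z_j$ are functions of disjoint blocks of the $X$'s, and therefore independent, whenever $|i-j|\ge 3$. I would exploit this by partitioning the index set $\{1,\dots,m\}$ into the three residue classes $S_r=\{i: i\equiv r \pmod 3\}$ for $r\in\{0,1,2\}$ and writing $Z=Z^{(0)}+Z^{(1)}+Z^{(2)}$ with $Z^{(r)}=\sum_{i\in S_r}Z_i$. Within a single class, consecutive indices differ by exactly $3$, so the blocks $\{X_i,X_{i+1},X_{i+2}\}$ are pairwise disjoint, and each $Z^{(r)}$ is a genuine sum of roughly $m/3$ independent $\mathrm{Bernoulli}(\beta^3)$ variables with mean $\mu^{(r)}=|S_r|\beta^3\approx \beta^3 m/3$.

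With independence restored inside each class, I would apply the standard multiplicative Chernoff bound to each $Z^{(r)}$ separately, obtaining $\Pr(|Z^{(r)}-\mu^{(r)}|>\delta\mu^{(r)})\le 2\exp(-\delta^2\mu^{(r)}/3)=\exp(-\Omega(\delta^2\beta^3 m))$. To transfer this to $Z$, note that if every class stays within a relative $\delta$ of its mean then the triangle inequality forces $|Z-\beta^3 m|=|\sum_r (Z^{(r)}-\mu^{(r)})|\le \sum_r \delta\mu^{(r)}=\delta\beta^3 m$; contrapositively, a deviation of $Z$ implies a deviation in at least one class, so a union bound over the three classes yields $\Pr(|Z-\beta^3 m|>\delta\beta^3 m)\le 6\exp(-\Omega(\delta^2\beta^3 m))$, which is the claimed bound.

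I expect the main obstacle to be precisely this dependence structure, and I want to flag why the partition is essential rather than cosmetic. The cheaper route of treating $Z$ as a function of the independent variables $X_1,\dots,X_m$ and invoking McDiarmid's bounded-difference inequality (each $X_j$ influences at most three summands, so the bounded difference is at most $3$) would only give an exponent of order $\delta^2\beta^6 m$, since there the target deviation $\delta\beta^3 m$ enters squared against a variance proxy of order $m$. The residue-class decomposition is what recovers the correct $\beta^3 m$ scaling in the exponent, matching the true variance of $Z$ rather than the crude worst-case sensitivity, and is therefore the heart of the argument.
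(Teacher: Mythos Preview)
Your proposal is correct and takes essentially the same approach as the paper: both decompose $Z$ into the three residue classes modulo $3$ so that each $Z^{(r)}$ is a sum of i.i.d.\ $\mathrm{Bernoulli}(\beta^3)$ variables, apply a multiplicative Chernoff bound to each class, and combine via a union bound. Your write-up is in fact more explicit than the paper's (which packages the last step into a cited lemma), and your remark contrasting the residue-class decomposition with McDiarmid is a nice addition not present in the original.
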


\begin{proof}
Without loss of generality, we assume that $m$ is a multiple of $3$. 
Let $Z^{j} = \sum_{i=0}^{m/3-1} Z_{j + 3i}$ ($j \in [0, 1, 2]$).
Then, $Z = Z^{0} + Z^{1} + Z^{2}$. 
It is easy to show that $E\left( Z^{j} \right) = \beta^3 m/3$, since $P\{Z_i = 1\} = 
P\{ X_i = 1\} P\{ X_{i+1} = 1 \} P\{ X_{i+2} = 1 \}= \beta^3$.
Note that $\{ Z_{0}, Z_{3}, \ldots, Z_{m-1} \}$ are independent random variables, because $Z_{i}$ is a function of
$(X_{i},X_{i+1},X_{i+2})$. Hence, $Z^{j}$ is a sum of i.i.d. random variables. 
By Lemma~$4$ in~\cite{HotStuffGit}, we have 
\[\Pr \left( Z < (1 - \delta) \beta^3  m \right) <  e^{-\delta^2 \beta^3  m /6} = e^{-\Omega\left(\delta^2 \beta^3  m \right)}.
\]
Similarly, we have $\Pr \left( Z > (1 + \delta) \beta^3 m\right) < e^{-\delta^2 \beta^3 m / 9} = e^{-\Omega\left(\delta^2 \beta^3 m\right)}$. 
\end{proof} 

This lemma shows that as $m$ increases, the number of block fragments $(X_i, X_{i+1}, X_{i+2}) = (1, 1, 1)$ is between $(1-\delta)\beta^3 m$ and $(1+\delta)\beta^3 m$ with high probability. Moreover, each block fragment corresponds to one honest block included in the main chain. This leads to the following theorem for the chain growth rate.

\begin{theorem} \label{theo:growth1}
The chain growth rate of pipelined HotStuff under the forking attack converges to $\beta^3$ with high probability as $m \to \infty$. 
\end{theorem}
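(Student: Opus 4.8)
The plan is to translate Lemma~\ref{lem:keylemma1} into the asserted chain growth rate by appealing to the definition of $u_1(\alpha)$ and the interpretation of each ``$(1,1,1)$'' fragment as exactly one surviving honest block. Recall that under the forking attack, an honest block proposed at round $i$ stays in the main chain if and only if rounds $i+1$ and $i+2$ are also honest, so the indicator $Z_i$ counts precisely the honest blocks that survive. Thus $Z = \sum_{i=1}^m Z_i = B_h(m)$ (up to negligible boundary terms at the two ends of the window, which contribute $O(1)$ and vanish in the limit), and the theorem reduces to showing $Z/m \to \beta^3$.

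First I would fix an arbitrary $0 < \delta < 1$ and invoke Lemma~\ref{lem:keylemma1} to write
\begin{equation}
\Pr\!\left( \left| \frac{Z}{m} - \beta^3 \right| > \delta \beta^3 \right) < e^{-\Omega(\delta^2 \beta^3 m)}.
\end{equation}
Since $\sum_{m=1}^\infty e^{-\Omega(\delta^2 \beta^3 m)} < \infty$ (the summand is a convergent geometric-type series in $m$ for any fixed $\delta, \beta > 0$), the Borel--Cantelli lemma gives that, almost surely, the event $\{|Z/m - \beta^3| > \delta \beta^3\}$ occurs for only finitely many $m$. Hence $\limsup_{m\to\infty} |Z/m - \beta^3| \le \delta \beta^3$ almost surely, and letting $\delta \downarrow 0$ along a countable sequence yields $\lim_{m\to\infty} Z/m = \beta^3$ almost surely. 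Combining this with $B_h(m) = Z + O(1)$ gives $u_1(\alpha) = \lim_{m\to\infty} B_h(m)/m = \beta^3$ with high probability, as claimed.

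The main subtlety I expect is the bookkeeping around the identification $B_h(m) = Z$. One must argue that every honest block appended to the main chain corresponds to a unique $(1,1,1)$ fragment and vice versa: an honest block at round $i$ survives the forking attack exactly when the next two rounds produce honest blocks that extend it, so it is counted by $Z_i$, and no honest block is double-counted because distinct surviving honest blocks occupy distinct starting rounds $i$. The edge effects---honest blocks proposed in the last two rounds whose survival depends on rounds beyond $m$, and the possibility that a surviving block's ``third'' extending block falls just outside the window---affect only a bounded number of terms and are absorbed into the $O(1)$ correction, which is dominated by the linear growth of $m$. The probabilistic part is then routine once Lemma~\ref{lem:keylemma1} is in hand, so the crux is really this combinatorial correspondence between the attack's survival rule and the fragment count.
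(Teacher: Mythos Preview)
Your proposal is correct and follows essentially the same approach as the paper: identify $B_h(m)$ with $Z=\sum_i Z_i$ via the survival rule for honest blocks under the forking attack, then invoke Lemma~\ref{lem:keylemma1} to conclude $Z/m \to \beta^3$. The paper's own proof is a one-liner that asserts exactly this identification and limit; your version is simply more careful, adding the Borel--Cantelli step to upgrade the Chernoff bound to an almost-sure limit and explicitly accounting for the $O(1)$ boundary terms, both of which the paper leaves implicit.
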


\begin{proof}
By Lemma~\ref{lem:keylemma1} and the definition of $u_1(\alpha)$, we have $u_{1}(\alpha) = \lim_{m \rightarrow \infty}{\sum_{i=0}^{m-1} {Z_i} /m} \to \beta^3$.
\end{proof}

Note that when there does not exist the forking attack, the chain growth rate is $\beta$, for the probability that an honest node is elected as a leader is $\beta$. In other words, this theorem states that the forking attack reduces the chain growth rate
from $\beta$ to $\beta^3$.
For instance, if  $\beta = 2/3$, the chain growth rate is reduced from $2/3$ to $8/27$.

\begin{remark}
The chain growth rate is measured in terms of rounds. If it is measured in time, the forking attack can reduce the chain growth rate even more. This is because an adversarial leader can push the duration of its round close to the timeout value, which is usually much longer than the actual network delay for producing a certified block.
\end{remark}

\subsubsection{Chain Quality}
Recall that the adversary suffers no loss of blocks when launching the forking attack. That is,  every adversarial block can be kept in the main chain. 
Therefore, the adversary can produce $\alpha m$ adversarial blocks on expectation over $m$ rounds. 
This observation, together with Theorem~\ref{theo:growth1}, allows us to derive the following chain quality theorem for pipelined HotStuff. 
\begin{theorem}
The chain quality of pipelined HotStuff under the forking attack converges to $\frac{\beta^3}{\beta^3 - \beta + 1}$ with high probability as $m \to \infty$.
\end{theorem}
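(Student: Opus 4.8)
The plan is to reduce the chain quality to a ratio of two growth rates---one for honest blocks and one for adversarial blocks---each of which concentrates sharply as $m \to \infty$. By definition, $u_2(\alpha) = \lim_{m\to\infty} B_h(m)/(B_h(m)+B_a(m))$, so after dividing numerator and denominator by $m$ it suffices to determine the two limits $\lim_{m\to\infty} B_h(m)/m$ and $\lim_{m\to\infty} B_a(m)/m$ separately and then combine them.

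For the honest part I would directly invoke Theorem~\ref{theo:growth1}, which already establishes that $B_h(m)/m \to \beta^3$ with high probability. For the adversarial part I would first argue combinatorially that under the forking attack every round with an adversarial leader contributes exactly one adversarial block to the main chain, and that no adversarial block is ever overridden---this is precisely the ``no loss'' observation stated just before the theorem. Consequently $B_a(m) = \sum_{i=1}^{m}(1 - X_i)$ is a sum of $m$ i.i.d.\ Bernoulli$(\alpha)$ indicators, so a standard Chernoff bound (the same one used in Lemma~\ref{lem:keylemma1}, but now with genuinely independent summands so no block-interleaving trick is needed) yields $\Pr(|B_a(m) - \alpha m| > \delta \alpha m) < e^{-\Omega(\delta^2 \alpha m)}$, whence $B_a(m)/m \to \alpha$ with high probability.

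Combining the two concentration statements through a union bound, with high probability both ratios lie within a factor $(1 \pm \delta)$ of their respective limits simultaneously; substituting into the definition of $u_2(\alpha)$ and taking $\delta \to 0$ gives $u_2(\alpha) \to \beta^3/(\beta^3 + \alpha)$. The final step is an algebraic simplification: since the leader-election module elects an honest leader with probability $\beta$ and an adversarial one with probability $\alpha$, we have $\alpha + \beta = 1$, so $\alpha = 1 - \beta$ and $\beta^3 + \alpha = \beta^3 - \beta + 1$, which produces the claimed value $\frac{\beta^3}{\beta^3 - \beta + 1}$.

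I expect the only delicate point to be the combinatorial justification that $B_a(m)$ equals the number of adversarial leaders---i.e., that the forking attack overrides honest blocks while never sacrificing an adversarial block of its own---rather than the probabilistic estimates, which are routine given Lemma~\ref{lem:keylemma1}. Everything else is bookkeeping, and the two high-probability bounds are merged cleanly by a single union bound.
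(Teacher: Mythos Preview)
Your proposal is correct and follows essentially the same route as the paper: the paper likewise splits the ratio into $B_h(m)/m \to \beta^3$ (via Lemma~\ref{lem:keylemma1}, equivalently Theorem~\ref{theo:growth1}) and $B_a(m)/m \to \alpha$ (via a Chernoff-type bound on the i.i.d.\ adversarial-leader indicators, cited there as a lemma from the technical report), and then substitutes $\alpha = 1-\beta$. Your explicit mention of the union bound and the combinatorial ``no loss'' justification for $B_a(m)=\sum_i(1-X_i)$ merely make precise what the paper leaves implicit.
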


\begin{proof}
As $m \to \infty$, $\frac{B_a(m)}{m}$ (the number of adversarial blocks divided by $m$) converges to $\alpha$ by Lemma~3 in~\cite{HotStuffGit}, and $\frac{B_h(m)}{m}$ (the number of honest blocks in the main chain divided by $m$) converges to $\beta^3$ by Lemma~\ref{lem:keylemma1}.
Hence, the chain quality converges to $\frac{\beta^3}{\beta^3 -\beta + 1}$. Note that $\alpha = 1 - \beta$.
\end{proof}

For example, if $\alpha = 1/3$, the chain quality under the forking attack is $8/17$, whereas it should be $2/3$ without the forking attack.
If each block can bring its owner a reward, the adversary can obtain a fraction of rewards $\frac{\alpha}{\beta^3 -\beta + 1}$ by launching the forking attack, which is always higher than the deserved fraction $\alpha$, for $\beta^3 -\beta + 1 < 1$.
In other words, incentive compatibility of pipelined HotStuff cannot hold anymore under the forking attack. 

\begin{figure}[t]
\centering
\includegraphics[width=2.2in]{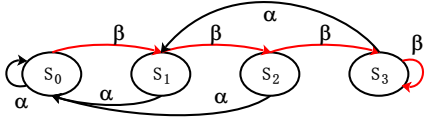}
\caption{\textbf{The state transition of the delay attack on pipelined HotStuff.}}
\vspace{-2mm}
\label{fig:systemTransit}
\end{figure}

\subsubsection{Latency} \label{subsec:latency}
Here, we compute the latency of honest blocks in the main chain. 
To achieve this goal, we need to track each honest block included in the main chain and obtain its associated delay to be committed. 
More precisely, the chance that an honest block is kept in the main chain and its delay is affected by the delay attack strategies, which further depend on the chain structure. 
For example, in Fig.~\ref{fig:delayAttack}, as shown in Case A, as there is no $3$-chain structure of the latest blocks, the adversary proposes no block and honest blocks $B_{k+2}$ and $B_{k+3}$ are kept in the main chain; however, they will be overridden in Case B. 
Therefore, we need to track the previous block structure of every honest block. To this end, we define four states as follows:
\begin{itemize}[leftmargin=*]
    \item $S_{0}$: the state where the previous round is a timeout and no certified block is produced;

    \item $S_i$ for $i \in \left \{1, 2, 3 \right \} $: the state where there exists $i$ consecutive blocks that are not committed yet.
\end{itemize}
Under the delay attack described in Sec~\ref{subsec:attacks}, we can develop a Markov model of state transitions in Fig.~\ref{fig:systemTransit}.
Recall that $\alpha$ (respectively, $\beta$) is the probability that an honest (respectively, adversarial) leader proposes a new block. 
Each transition denotes a new round and there exists a designated leader.
An honest leader always proposes a new block that extends the newest certified block (denoted as the red line in Fig.~\ref{fig:systemTransit}).
This Markov model allows us to track the previous chain structure for any new honest block as well as to obtain its chance to be included in the main chain and the associated delay. 
We have the following theorem on the delay of pipelined HotStuff. 
\begin{theorem}\label{theo:latency}
The latency of pipelined HotStuff under the delay attack converges to $\frac{\beta^7 + 3\beta^6 - 4\beta^5 + 2\beta^4 + \beta^3 - 2\beta^2 + \beta + 1}{2 \beta^7 - 2 \beta^6 + \beta^4}$ with high probability as $m \to \infty$.
\end{theorem}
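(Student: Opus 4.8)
The plan is to analyze the four-state Markov chain of Fig.~\ref{fig:systemTransit} as a renewal--reward process and to obtain $u_3$ as a ratio of two long-run rates: the rate at which honest blocks accrue commit-delay and the rate at which honest blocks are permanently retained in the main chain. First I would pin down the transition probabilities implied by the delay attack: from each of $S_0,S_1,S_2$ an honest leader (probability $\beta$) advances the consecutive run by one while an adversarial leader (probability $\alpha$) forces a timeout back to $S_0$; from $S_3$ an honest leader keeps the system in $S_3$ and commits the oldest of the current three consecutive blocks, whereas an adversarial leader overrides the newest block and drops to $S_1$. Solving the balance equations together with normalization gives the stationary distribution; writing $D=\alpha+\alpha\beta+\beta^3=1-\beta^2+\beta^3$, one finds $\pi_0=\alpha^2(1+\beta)/D$, $\pi_1=\alpha\beta/D$, $\pi_2=\alpha\beta^2/D$, and $\pi_3=\beta^3/D$.

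For the denominator of $u_3$ I would count the honest blocks that survive. An honest block is overridden only when it is the newest block of a three-consecutive run and the next leader is adversarial, i.e.\ exactly on the $S_3\to S_1$ transitions, so honest blocks are lost at rate $\alpha\pi_3$ while they are produced at rate $\beta$; hence the retention rate is $\lambda=\beta-\alpha\pi_3$, which simplifies to $\beta(2\beta^3-2\beta^2+1)/D$ and already explains the factor $2\beta^3-2\beta^2+1$ in the stated denominator. For the numerator I would compute the long-run rate at which commit-delay is accrued, most cleanly via Little's law, $u_3=L/\lambda$, where $L$ is the stationary expected number of honest blocks that have been included but not yet committed. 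The subtle point is that commitment is not a local function of the current state: a commit fires only on an $S_3\to S_3$ transition (the oldest of the three consecutive blocks, together with all of its not-yet-committed ancestors, is then committed), so an override leaves its two older honest blocks \emph{stalled} as ancestors that commit only when some later block does, and further overrides or timeouts before recovery stack additional stalled blocks.

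The hard part is therefore this non-local, batched commitment: the pending count $L$ depends not just on the current state but on how many stalled honest blocks have been carried over since the last recovery commit. I would handle it by augmenting the chain with a counter for the stalled blocks (or, equivalently, by setting up first-passage equations for the expected number of rounds from each state to the next commit event and charging each block the delay of the commit event that absorbs it), then taking the stationary expectation to obtain $L$ as a rational function of $\beta$. Substituting $\alpha=1-\beta$ and forming $u_3=L/\lambda$ collapses the $D$ factors and should yield the claimed value $\frac{\beta^7+3\beta^6-4\beta^5+2\beta^4+\beta^3-2\beta^2+\beta+1}{2\beta^7-2\beta^6+\beta^4}$; the concentration needed to pass from finite-$m$ averages to this limit ``with high probability'' follows from the same i.i.d.-partition Chernoff argument used in Lemma~\ref{lem:keylemma1}. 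I expect the bookkeeping of the stalled blocks and the ensuing algebraic simplification to be the only real difficulties, the Markov set-up and the limiting argument being routine.
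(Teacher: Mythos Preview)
Your Markov set-up, stationary distribution, and retention rate $\lambda=\beta(2\beta^{3}-2\beta^{2}+1)/D$ are all correct, and the Little's-law identity $u_{3}=L/\lambda$ is valid here. However, the paper takes the more direct route that you mention only as a parenthetical alternative. Instead of computing the stationary backlog $L$ (which, as you rightly note, is not a function of the four states and would force an unbounded counter), the paper works per block. It first solves the four linear first-passage equations for $\e{X_{i}}$, the expected number of rounds from $S_{i}$ to the next commit event ($\e{X_{0}}=1+\alpha\e{X_{0}}+\beta\e{X_{1}}$, \ldots, $\e{X_{3}}=1+\alpha\e{X_{1}}$). Then it classifies each honest block by the transition that produced it ($S_{0}\!\to\!S_{1}$, $S_{1}\!\to\!S_{2}$, or $S_{\ge 2}\!\to\!S_{3}$) and follows the chain forward for at most two further rounds, until the block is either overridden or has become the oldest block of the current consecutive run (or an ancestor of it); at that point its residual delay is exactly $\e{X_{i}}$ for whatever state $i$ the chain is in. Weighting these conditional delays by $\pi_{i}\beta$ and the survival probabilities yields $u_{3}$ directly. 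The observation that dispenses with your augmentation is that once a block sits at or below the oldest position of the current run, the \emph{very next} commit event absorbs it, regardless of how many other stalled blocks are queued with it---so a global count is never needed.

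Your Little's-law framing buys a clean explanation of the denominator (the factor $2\beta^{3}-2\beta^{2}+1$ drops out of $\lambda$ for free), but the numerator is strictly harder your way. Your ``equivalently'' is doing real work: the first-passage-per-block calculation is the paper's proof and bypasses $L$ entirely, so it is not an implementation of $L/\lambda$ but a separate, shorter argument. If you pursue the augmented-counter route you will eventually recover the same polynomial, but through a countably-infinite chain rather than four linear equations plus a short case analysis.
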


\begin{proof}
First, by solving the above Markov model, we can obtain the steady-state distribution of each state as follows:
\begin{equation} \nonumber
\begin{split}
        \pi_0 &= \frac{(1+\beta)(1-\beta)^2}{\beta^3 - \beta^2 + 1}, \quad \pi_{1} = \frac{\beta (1-\beta)}{\beta^3 - \beta^2 + 1}, \\
 \pi_{2} &= \frac{ \beta^2 (1 - \beta)}{\beta^3 - \beta^2 + 1}, \quad \pi_{3} = \frac{\beta^3}{\beta^3 - \beta^2 + 1}.
\end{split}
\end{equation}
Next, we can analyze the latency of honest blocks in each state transition. 
In particular, we focus on the blocks that eventually end up in the main chain. 
\begin{itemize}[leftmargin=*]
\setlength{\itemsep}{0pt}
\setlength{\topsep}{0pt}
\setlength{\partopsep}{0pt} 
    \item \emph{Case $a$: $S_0 \xrightarrow{\beta} S_1$.} All proposed honest blocks will be kept in the main chain, and their average delay is $\frac{\beta^3 + \beta + 1}{ \beta^4} $. 

    \item \emph{Case $b$: $S_{1} \xrightarrow{\beta} S_2$.} The honest blocks have $\alpha$ probability to be committed with an average delay $\frac{\beta^4 + 2  \beta^3 + \beta + 1}{ \beta^4}$, $\alpha \beta$ probability to be committed with an average delay $\frac{2 \beta^4 + \beta^3 + \beta + 1}{ \beta^4}$, and $\beta^2$ probability to be committed with an average delay $\frac{2 \beta^4 +  \beta^3 - \beta^2 + 1}{\beta^4}$.
    
    \item \emph{Case $c$: $S_{2} \xrightarrow{\beta}S_3$ and $S_{3} \xrightarrow{\beta}S_4$.} The honest blocks have $\beta^2$ probability to be committed with an average delay $\frac{2 \beta^4 +  \beta^3 - \beta^2 + 1}{\beta^4}$ and $\alpha \beta$ probability to be committed with an average delay $\frac{2 \beta^4 +  \beta^3 + \beta + 1}{\beta^4}$.
\end{itemize}
Due to space constraint, the detailed proofs of these cases are provided in Appendix~B1 of our technical report~\cite{HotStuffGit}. With these results, it is easy to obtain the latency as:

\begin{equation}
    \begin{aligned}
        u_{3}(\alpha) = \frac{\beta^7 + 3\beta^6 - 4\beta^5 + 2\beta^4 + \beta^3 - 2\beta^2 + \beta + 1}{2 \beta^7 - 2 \beta^6 + \beta^4}.
    \end{aligned}     
\end{equation}
This completes the proof.
\end{proof}

The theorem shows that when the adversary controls $1/3$ of Byzantine nodes (i.e., $\alpha = 1/3$ and $\beta = 2/3$), the average latency for committing one block under the delay attack is about $8.33$ rounds. 
By contrast, without the delay attack, a block is committed if the next three consecutive blocks extend it with a latency of $3$ rounds. 

\subsection{Performance Analysis of LibraBFT}\label{subsec:libraAnalysis}
We analyze the performance of LibraBFT under the attack strategies. 
In particular, we will evaluate the differences between pipelined HotStuff and LibraBFT. 
These differences made by LibraBFT aim to cut off the delay of relaying blocks' QCs or fasten the block commitment (see Sec.~\ref{sec:algorithm}). 

On the one hand, as the forking attack strategies are the same, LibraBFT has the same chain growth rate and chain quality as pipelined HotStuff.
In other words, these changes do not affect these two metrics.
On the other hand, we can develop a similar Markov model to analyze the delay attack in LibraBFT as shown in Fig.~\ref{fig:latencyLibra}. This allows us to obtain the following theorem on the latency for LibraBFT.

\begin{figure}[t]
\centering
\includegraphics[width=2.3in]{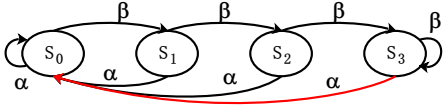}
\caption{\textbf{The state transition of the delay attack on LibraBFT.} The red line denotes a different action adopted by the adversarial leader in LibraBFT.}
\vspace{-2mm}
\label{fig:latencyLibra}
\end{figure}


\begin{theorem}\label{theo:latency2}
The latency of LibraBFT under the delay attack converges to $\frac{\beta^7 + \beta + 1}{\beta^7 - \beta^6 + \beta^4}$ with high probability as $m \to \infty$. 
\end{theorem}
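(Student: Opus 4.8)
The plan is to mirror the Markov-chain argument used in the proof of Theorem~\ref{theo:latency}, but with the transition structure adapted to LibraBFT's modified delay attack depicted in Fig.~\ref{fig:latencyLibra}. First I would reuse the four states $S_0, S_1, S_2, S_3$ that track the number of consecutive uncommitted blocks at the tip of the main chain, and read the transition probabilities off the figure. The crucial structural change is that, because of the Nil block, the adversary can no longer simply force a timeout round; instead it either splits the honest votes between its own block and the Nil block (so that no certificate of either kind forms) or, when three consecutive blocks already exist, hides the QC of the most recent one to prevent a $3$-direct chain from completing. This is exactly the red transition in Fig.~\ref{fig:latencyLibra}, and it is the only thing that distinguishes the LibraBFT chain from the pipelined-HotStuff chain.

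Next I would solve the balance equations of this chain for its stationary distribution $(\pi_0, \pi_1, \pi_2, \pi_3)$. I anticipate the normalizing constant to again be $\beta^3 - \beta^2 + 1$, which is consistent with the denominator $\beta^4(\beta^3 - \beta^2 + 1) = \beta^7 - \beta^6 + \beta^4$ of the claimed latency; this suggests that the stationary distribution coincides with that of Theorem~\ref{theo:latency} and that the latency difference arises entirely from the modified per-transition delays rather than from the state occupancies. With the stationary distribution in hand, I would carry out a case analysis over the transitions analogous to Cases $a$--$c$ of Theorem~\ref{theo:latency}: for each transition that deposits a fresh honest block onto the main chain, I would compute both the probability that the block survives the attack and is eventually committed, and the expected number of rounds it waits until an extending $3$-direct chain completes. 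The essential quantitative difference is that LibraBFT removes the explicit relay step (votes go straight to the next leader), so the per-block delays are uniformly shorter by the relay round; I would recompute each delay figure accordingly.

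Finally, I would assemble the latency as the $\pi$-weighted average of the per-transition conditional delays, exactly as in the displayed formula of Theorem~\ref{theo:latency}, and simplify to $\frac{\beta^7 + \beta + 1}{\beta^7 - \beta^6 + \beta^4}$; a numerical check at $\beta = 2/3$ should return roughly $10.25$ rounds, matching the value quoted in the introduction. The ``with high probability as $m \to \infty$'' statement would follow, as before, from ergodicity of the finite chain together with a concentration bound of the Chernoff type in Lemma~\ref{lem:keylemma1}, so that the empirical time-average of the per-block delays converges to the stationary average.

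The step I expect to be the main obstacle is the per-case delay bookkeeping under the two new adversarial moves. Unlike the timeout-based attack of pipelined HotStuff, the vote-splitting and QC-hiding manoeuvres change both which honest blocks ultimately survive and how long their commitment is postponed, so the conditional delays and survival probabilities in each case must be re-derived from scratch rather than copied from Theorem~\ref{theo:latency}; keeping the Nil-block rounds correctly synchronized with the $3$-direct-chain commit rule is where the accounting is most error-prone. As in Theorem~\ref{theo:latency}, the full case-by-case computation would be deferred to the appendix of the technical report~\cite{HotStuffGit}, with only the resulting closed form reported here.
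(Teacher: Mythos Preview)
Your high-level plan---reuse the four-state Markov chain, compute the stationary distribution, do a case analysis of per-block delays, and average---is exactly what the paper does. But two of your anticipated intermediate results are wrong, and both stem from misreading what the red transition in Fig.~\ref{fig:latencyLibra} actually does.

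First, the stationary distribution does \emph{not} coincide with that of Theorem~\ref{theo:latency}. In pipelined HotStuff the adversary in state $S_3$ overrides the last block and lands in $S_1$; in LibraBFT the adversary hides the QC (and splits votes) so that the chain drops all the way to $S_0$. Solving the balance equations with $S_3\xrightarrow{\alpha}S_0$ gives $\pi_0=1-\beta$, $\pi_1=\beta(1-\beta)$, $\pi_2=\beta^2(1-\beta)$, $\pi_3=\beta^3$, whose normalizing constant is $1$, not $\beta^3-\beta^2+1$. The factor $\beta^3-\beta^2+1$ in the denominator of the claimed latency does not come from the stationary distribution; it is the per-round rate $\beta(\beta^3-\beta^2+1)$ at which honest blocks actually land in the main chain, divided by $\beta$, once you weight the survival probabilities in the case analysis.

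Second, the per-block delays are not ``uniformly shorter by the relay round''; they are \emph{longer}. The first-passage times $\e{X_i}$ to the commit state $S_x$ satisfy $\e{X_3}=\alpha\e{X_0}+1$ in LibraBFT versus $\e{X_3}=\alpha\e{X_1}+1$ in HotStuff, and since $\e{X_0}>\e{X_1}$ this propagates to make every $\e{X_i}$ larger (e.g.\ $\e{X_1}=(\beta^2+\beta+1)/\beta^4$ here versus $(\beta^3+\beta+1)/\beta^4$ in HotStuff). This is precisely why the final latency $10.25$ exceeds $8.33$, which would be inconsistent with your ``shorter by one round'' heuristic. The case split also changes: in LibraBFT the $S_1\xrightarrow{\beta}S_2$ transition behaves like Case~$a$ (the block always survives, with delay $\e{X_1}$), not like the three-subcase Case~$b$ of Theorem~\ref{theo:latency}; and the surviving-with-probability-$\beta$ behaviour moves to $S_2\xrightarrow{\beta}S_3$ and $S_3\xrightarrow{\beta}S_3$. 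If you carry out the computation with these corrections you will recover $(\beta^7+\beta+1)/(\beta^7-\beta^6+\beta^4)$.
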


\begin{proof}
First, by solving the above Markov model, we can obtain the steady-state distribution of each state as follows:
\begin{equation} \nonumber
\begin{split}
        \pi_0 = 1-\beta, \quad \pi_{1} = \beta (1-\beta), \quad \pi_{2} = \beta^2 (1-\beta), \quad \pi_{3} = \beta^3.
\end{split}
\end{equation}
Next, we can analyze the latency of honest blocks in each state transition. We detail honest blocks on each event below.
\begin{itemize}[leftmargin=*]
\setlength{\itemsep}{0pt}
\setlength{\topsep}{0pt}
\setlength{\partopsep}{0pt} 
    \item \emph{Case $a$: $S_0 \xrightarrow{\beta} S_1$ and $S_{1} \xrightarrow{\beta} S_2$.} 
    All honest blocks produced after a previous timeout round or just one consecutive block will be kept in the main chain, and their average delay is $\frac{\beta^2 + \beta + 1}{\beta^4}$. 
    
    \item \emph{Case $b$: $S_{2} \xrightarrow{\beta}S_3$ and $S_{3} \xrightarrow{\beta}S_4$.} The honest blocks have $\beta^2$ probability to be committed with an average delay $\frac{2 \beta^4 + 1}{\beta^4}$ and $\alpha \beta$ probability to be committed with an average delay $\frac{2 \beta^4 +  \beta^3 + \beta^2 + \beta + 1}{\beta^4}$.
\end{itemize} 
Due to space constraint, the detailed proofs of these cases are provided in Appendix~B2 of our technical report~\cite{HotStuffGit}. With these results, it is easy to obtain the latency as:
\begin{equation}
    \begin{aligned}
        u_{3}(\alpha) = \frac{\beta^7 + \beta + 1}{\beta^7 - \beta^6 + \beta^4}.
    \end{aligned}     
\end{equation}
This completes the proof.
\end{proof}

The theorem shows that when $\beta = 2/3$, the average latency for committing one block under the delay attack is about $10.25$ rounds.
Compared with the delay in pipelined HotStuff, it suggests that the mechanism of sending votes to the next-round leader makes the system more vulnerable against the delay attack.

\section{Countermeasures} \label{sec:countermeasure}

\subsection{Broadcasting QCs}
The first countermeasure is that current-round leaders broadcast QCs to all nodes (rather than just relaying QCs to next-round leaders)\footnote{An earlier version of LibraBFT has adopted this mechanism.}.
Broadcasting QCs can provide two benefits. 
First, when nodes receive QCs, they can update their $locked\_round$, which can effectively thwart the forking attack. 
For example, in Fig.~\ref{fig:fastcommit}, when honest nodes receive QC for block $B_{k+2}$, they can update $locked\_round$ from $k$ to $k+1$.
As a result, the adversarial leader of round $k+3$ can only override block $B_{k+2}$; however, without this mechanism, the adversarial leader can override both blocks $B_{k+1}$ and $B_{k+2}$.
Second, broadcasting QCs can fasten block commitments.
Specifically, when nodes observe a $3$-direct chain of $B_k$, $B_{k+1}$, and $B_{k+2}$, as well as  block $B_{k+2}$'s QC, they can commit block $B_{k}$, as shown in Fig.~\ref{fig:fastcommit}. 
By contrast, in pipelined HotStuff, without broadcasting QCs, nodes need to wait until the subsequent block carrying block $B_{k+2}$'s QC is received and then commit block $B_{k}$.
As said in Sec.~\ref{subsec:latency}, this enables an adversarial leader at round $k+3$ to hide $B_{k+2}$'s QC and ruin the $3$-direct chain to increase block delay. 
Additionally, even in the ideal case, broadcasting QC can accelerate the block commitment; the latency for committing one block is reduced to two rounds. In the following, we provide a formal analysis of the improvement in chain growth rate, chain quality, and delay.

\begin{figure}[t]
\centering
\includegraphics[width=2in]{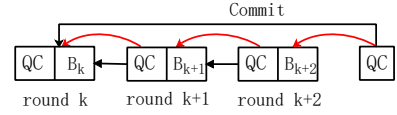}
\caption{\textbf{The committing rule with broadcasting QCs in Pipelined HotStuff.}}
\vspace{-2mm}
\label{fig:fastcommit}
\end{figure}

\subsubsection{Chain Growth Rate and Chain Quality}
With broadcasting QCs, an honest block proposed at round $i$ can only be overridden by an adversarial leader at round $(i+1)$. In other words, an honest block can be kept in the main chain if and only if the subsequent leader is an honest leader. 
Let $Y_i$ denote an indicator random variable, which equals to one if $\{ X_i = 1, X_{i+1} = 1\}$ and equals to zero otherwise. Next, let $Y = \sum_{i=1}^{m}Y_i$.
By following our previous analysis, we can bound the value of $Y$:

\begin{lemma}
For $m$ consecutive rounds, the number of block fragments $(X_i, X_{i+1}) = (1,1)$ has the following Chernoff-type bound: For $0 < \delta < 1$,
\begin{equation}
   \Pr( | Y - \beta^2 m| > \delta \beta^2 m ) < e^{-\Omega\left(\delta^2 \beta^2  m\right)}. 
\end{equation}
\end{lemma}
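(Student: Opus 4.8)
The plan is to mirror the proof of Lemma~\ref{lem:keylemma1}, decomposing $Y$ into independent subsums and applying the same Chernoff-type bound termwise. The obstacle to a direct application is that the indicators $Y_i$ are not mutually independent: the consecutive indicators $Y_i$ and $Y_{i+1}$ both depend on the shared variable $X_{i+1}$, so the sliding windows $\{i,i+1\}$ overlap. However, any two indicators whose indices differ by at least $2$ are functions of disjoint pairs of the underlying i.i.d.\ variables $X_\cdot$ and are therefore independent. This is exactly the structure exploited in Lemma~\ref{lem:keylemma1}, except that a width-$2$ window requires only two interleaved classes rather than three.

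First I would assume without loss of generality that $m$ is even and split the index set into two interleaved classes. Define $Y^{j} = \sum_{i=0}^{m/2-1} Y_{j+2i}$ for $j \in \{0,1\}$, so that $Y = Y^{0} + Y^{1}$. Within a fixed class the windows $\{j+2i,\, j+2i+1\}$ are pairwise disjoint as $i$ ranges, hence each $Y^{j}$ is a sum of i.i.d.\ Bernoulli random variables. Since $P\{Y_i = 1\} = P\{X_i=1\}\,P\{X_{i+1}=1\} = \beta^2$, we obtain $E\left(Y^{j}\right) = \beta^2 m/2$.

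Next I would apply Lemma~$4$ in~\cite{HotStuffGit} to each $Y^{j}$ separately, obtaining both an upper and a lower tail bound of the form $\Pr\left(\left|Y^{j} - \beta^2 m/2\right| > \delta \beta^2 m/2\right) < e^{-\Omega\left(\delta^2 \beta^2 m\right)}$. A union bound over $j \in \{0,1\}$, combined with the observation that a deviation of $Y$ from its mean $\beta^2 m$ by more than $\delta \beta^2 m$ forces at least one $Y^{j}$ to deviate from its mean $\beta^2 m/2$ by more than $\delta \beta^2 m/2$, then yields the claimed bound $\Pr\left(\left|Y - \beta^2 m\right| > \delta \beta^2 m\right) < e^{-\Omega\left(\delta^2 \beta^2 m\right)}$.

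The only real subtlety, and the step I would be most careful about, is the independence argument inside each class $Y^{j}$; this is where the decomposition into two interleaved subsums is essential, matching the fact that $Y_i$ is determined by a width-$2$ window. Everything else is routine bookkeeping, with the differing constants in the upper and lower tails absorbed into the $\Omega(\cdot)$ notation.
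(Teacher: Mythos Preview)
Your proposal is correct and follows essentially the same approach as the paper, which simply states that the analysis is very similar to that for Lemma~\ref{lem:keylemma1}. Your explicit adaptation---splitting into two interleaved classes for the width-$2$ window instead of three, then invoking the same Chernoff-type bound---is exactly the intended argument.
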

\begin{proof}
The analysis is very similar to that for Lemma~\ref{lem:keylemma1}.
\end{proof} 

With this lemma, we can easily derive the following theorems on chain growth rate and chain quality, respectively.

\begin{theorem} \label{theo:growth2}
The chain growth rate of pipelined HotStuff with broadcasting QCs under the forking attack converges to $\beta^2$ with high probability as $m \to \infty$. 
\end{theorem}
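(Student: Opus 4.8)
The plan is to mirror the argument used for Theorem~\ref{theo:growth1}, replacing the role of length-three honest runs with length-two honest runs. First I would recall the definition $u_1(\alpha) = \lim_{m\to\infty} B_h(m)/m$, where $B_h(m)$ counts the honest blocks that survive on the main chain over $m$ rounds, and then show that under broadcasting QCs the survival criterion simplifies to a two-round condition.

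The crucial structural step, already established in the paragraph preceding the lemma on $Y$, is that with broadcasting QCs an honest block proposed at round $i$ can be overridden only by an adversarial leader at round $i+1$: receiving the child block's QC advances every honest node's $locked\_round$ by one extra round compared with the vanilla protocol, so a block that is followed by one more honest round can no longer be forked out. Consequently the block at round $i$ is permanently retained exactly when round $i+1$ also has an honest leader, i.e.\ exactly when $X_i = X_{i+1} = 1$, i.e.\ exactly when $Y_i = 1$. This gives a clean bijection between surviving honest blocks and the indices $j$ with $Y_j = 1$ (no double counting, since the block at round $j$ is tracked by $Y_j$ alone), yielding the identity $B_h(m) = Y + O(1)$, where the $O(1)$ absorbs the at most two boundary rounds near index $m$ whose contribution vanishes in the limit.

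Given this identity, I would invoke the Chernoff-type concentration from the preceding lemma, which confines $Y$ to the interval $\bigl((1-\delta)\beta^2 m,\,(1+\delta)\beta^2 m\bigr)$ except with probability $e^{-\Omega(\delta^2 \beta^2 m)}$. Dividing by $m$, letting $m \to \infty$, and then $\delta \to 0$, yields $u_1(\alpha) = \lim_{m\to\infty} Y/m \to \beta^2$ with high probability, which is the claim.

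The only genuinely non-routine point is justifying the ``if and only if'' in the survival criterion --- specifically, that broadcasting the QC of a certified child block is what pulls $locked\_round$ up by exactly one round, so that an honest block followed by one honest round becomes un-forkable while an honest block followed by an adversarial round remains forkable. Once that locking behaviour is accepted (it is illustrated by the $B_{k+2}$ example just above the theorem), the remainder is an immediate transcription of the $\beta^3$ argument with triples replaced by pairs.
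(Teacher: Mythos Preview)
Your proposal is correct and follows exactly the approach the paper intends: the paper omits the proof of Theorem~\ref{theo:growth2} entirely, stating only that it ``can be easily proved by following our previous analysis for pipelined HotStuff,'' i.e., by rerunning the argument of Theorem~\ref{theo:growth1} with Lemma~2 (the concentration of $Y$) in place of Lemma~\ref{lem:keylemma1}. Your write-up simply fills in those details---identifying $B_h(m)$ with $Y$ up to boundary effects via the two-round survival criterion, invoking the Chernoff bound, and taking the limit---which is precisely the template the paper uses for $\beta^3$ and invites you to transcribe for $\beta^2$.
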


\begin{theorem}
The chain quality of pipelined HotStuff with broadcasting QCs under the forking attack converges to $\frac{\beta^2}{\beta^2 - \beta + 1}$ with high probability as $m \to \infty$.
\end{theorem}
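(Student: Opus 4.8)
The plan is to follow the template of the chain quality theorem for plain pipelined HotStuff, substituting the broadcasting-QC growth rate $\beta^2$ for the old rate $\beta^3$. Recall that the chain quality is $u_2(\alpha) = \lim_{m \to \infty} B_h(m)/(B_h(m) + B_a(m))$, so it suffices to determine the long-run rates at which honest and adversarial blocks enter the main chain and then take their ratio. The honest rate is already available: by Theorem~\ref{theo:growth2} (which rests on the preceding Chernoff bound for $Y$), $B_h(m)/m \to \beta^2$ with high probability as $m \to \infty$.

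The remaining ingredient is the adversarial rate $B_a(m)/m$, and here the one point that needs care is arguing that broadcasting QCs does not cost the adversary any of its own blocks. When an adversarial leader is elected under the forking attack it extends the newest certified block (either its newest adversarial certified block whose round is at least $locked\_round$, or the block at $locked\_round$), so its proposal meets the voting condition and is certified by the honest nodes; moreover any subsequent honest leader, extending the newest certified block, will build on it. Broadcasting QCs only raises $locked\_round$ faster, which strengthens the protection of \emph{honest} blocks against overriding and never forces the adversary to drop an adversarial block. Thus every adversarial block is retained, and by the same counting argument used before (Lemma~3 of \cite{HotStuffGit}) we obtain $B_a(m)/m \to \alpha$.

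Putting the two rates together gives $u_2(\alpha) = \beta^2/(\beta^2 + \alpha)$, and since $\alpha = 1 - \beta$ this simplifies to $\beta^2/(\beta^2 - \beta + 1)$, matching the claim. I expect essentially all the difficulty to sit in the middle step --- confirming that the adversary still suffers no loss of blocks once QCs are broadcast; once that is settled the conclusion is immediate and the final simplification is routine.
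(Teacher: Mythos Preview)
Your proposal is correct and follows essentially the same route the paper indicates: the paper simply states that this theorem (together with Theorem~\ref{theo:growth2}) ``can be easily proved by following our previous analysis for pipelined HotStuff,'' i.e., combine the honest-block rate $\beta^2$ from the preceding lemma with the adversarial-block rate $\alpha$ from Lemma~3 of~\cite{HotStuffGit} and simplify. Your added justification that broadcasting QCs never forces the adversary to drop one of its own blocks is a reasonable elaboration of a point the paper leaves implicit.
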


These two theorems can be easily proved by following our previous analysis for pipelined HotStuff. Due to space constraints, we do not provide the proofs here. 
When $\beta = 2/3$, the chain growth rate improves by $1.5$x, while the chain quality improves by $1.2$x.

\subsubsection{Latency}
Following our previous analysis, we can develop a Markov model of the delay attack in Fig.~\ref{fig:latencyQC}. Note that as the adversarial leader always chooses to propose no blocks, all honest blocks can be kept in the main chain. 
Moreover, by solving the above Markov model, we have the following theorem.

\begin{theorem}\label{theo:latency4}
The latency of pipelined HotStuff with broadcasting QCs under the delay attack converges to $\frac{\beta + 1}{\beta^3}$ with high probability as $m \to \infty$. 
\end{theorem}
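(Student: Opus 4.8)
The plan is to follow the same Markov-model methodology used for Theorems~\ref{theo:latency} and~\ref{theo:latency2}, but to exploit a simplification special to the broadcasting-QC setting. First I would record the structural fact stated in the text: because broadcasting QCs reduces forking to at most one block, the delay-attack adversary gains nothing from overriding and instead always proposes no block. Hence every honest block survives in the main chain, the per-round leader indicators $X_1, X_2, \dots$ are i.i.d.\ Bernoulli$(\beta)$, and the fast-commit rule of Fig.~\ref{fig:fastcommit} says that an honest block proposed in round $s$ is committed at the first round $t \ge s+2$ for which rounds $t-2, t-1, t$ all carry certified blocks, since $B_t$'s QC is then broadcast. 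Its delay is therefore $D(s) = c(s) - s$ with $c(s) = \min\{\, t \ge s+2 : X_{t-2} = X_{t-1} = X_t = 1 \,\}$.

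Next I would reduce the latency to a single conditional expectation. By ergodicity of the i.i.d.\ driving sequence, honest blocks arrive at rate $\beta$ and the cumulative delay per round converges to $\beta\,\mathbb{E}[D \mid X_s = 1]$; dividing by $B_h(m)/m \to \beta$ and noting that the finitely many not-yet-committed blocks near round $m$ are negligible in the limit gives
\[
u_3(\alpha) = \lim_{m \to \infty} \frac{\sum_{i=1}^{B_h(m)} D_i}{B_h(m)} = \mathbb{E}\!\left[ D \mid X_s = 1 \right].
\]
It thus suffices to compute $\mathbb{E}[D \mid X_s = 1]$.

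To do so I would set up the four-state chain of Fig.~\ref{fig:latencyQC}, whose transitions are $S_i \xrightarrow{\beta} S_{i+1}$ and $S_i \xrightarrow{\alpha} S_0$ (with $S_3 \xrightarrow{\beta} S_3$), and first solve for its stationary distribution
\[
\pi_0 = 1-\beta, \quad \pi_1 = \beta(1-\beta), \quad \pi_2 = \beta^2(1-\beta), \quad \pi_3 = \beta^3,
\]
exactly as in Theorem~\ref{theo:latency2}. One route is then to classify each honest block by the transition that produced it: a block is committed after exactly two rounds precisely when its next two rounds are honest (probability $\beta^2$), and is otherwise a ``leftover'' block --- one of the last one or two blocks of its run --- that commits only when a later run first completes three consecutive honest blocks; weighting these delays by $\pi_0, \dots, \pi_3$ should collapse to $\tfrac{\beta+1}{\beta^3}$. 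As an independent check, and in fact the cleaner derivation, I would instead track $c \in \{0,1,2,3\}$, the current number of trailing consecutive honest blocks, let $m_c$ be the expected number of further rounds until $c$ first reaches $3$, and solve
\[
m_0 = 1 + \beta m_1 + \alpha m_0, \quad m_1 = 1 + \beta m_2 + \alpha m_0, \quad m_2 = 1 + \alpha m_0,
\]
with $m_3 = 0$, which yields $m_1 = \tfrac{\beta+1}{\beta^3}$. Since a forward search for the committing triple starts from state $c = 1$ immediately after the block is proposed (and is independent of the past given $X_s = 1$), we get $\mathbb{E}[D \mid X_s = 1] = m_1$ directly.

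The main obstacle is the leftover blocks. Unlike the immediately committed ones, the last one or two blocks of a run are not confirmed within that run; they are committed --- together with all of their uncommitted predecessors --- only at the moment a subsequent run first produces three consecutive honest blocks, possibly across several intervening timeouts or short runs. Handling this correctly in the case analysis requires a first-passage/geometric-waiting argument and care that each commit-trigger cascades to \emph{all} earlier uncommitted blocks at once, so their contributions must be pooled rather than double-counted. The trailing-count recursion absorbs this bookkeeping automatically, which is why I would rely on it for the final value and keep the per-transition case analysis only as corroboration.
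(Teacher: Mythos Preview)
Your proposal is correct and follows essentially the same route as the paper: the paper observes that under broadcasting QCs every honest block survives, sets up the first-passage recursion $\e{X_0}=\alpha\e{X_0}+\beta\e{X_1}+1$, $\e{X_1}=\alpha\e{X_0}+\beta\e{X_2}+1$, $\e{X_2}=\alpha\e{X_0}+1$ (your $m_0,m_1,m_2$), and concludes that each honest block's delay is $\e{X_1}=\tfrac{\beta+1}{\beta^3}$ because the system is in state $S_1$ immediately after the block is proposed. Your ergodicity reduction to $\mathbb{E}[D\mid X_s=1]$ and the stationary-distribution side calculation are additional scaffolding the paper omits, but the core computation is identical; one small caveat is that the chain in Fig.~\ref{fig:latencyQC} has three recurrent states ($S_0,S_1,S_2$) rather than four, though this does not affect your main argument.
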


\begin{proof}
All produced honest blocks will be kept in the main chain, and their average delay is $\frac{\beta + 1}{ \beta^3}$. Due to space constraint, the detailed proofs of these cases are provided in Appendix~B3 of our technical report~\cite{HotStuffGit}.
\end{proof}

This theorem shows that when $\beta = 2/3$, the average latency for committing one block under the delay attack is $5.63$ rounds.
It suggests that broadcasting QCs can reduce the average block latency by almost $3$ rounds. 
Note that broadcasting QCs also brings additional delay.
Therefore, it is a design tradeoff, which should be evaluated in real settings in order to decide whether to adopt it.

\begin{figure}[t]
\centering
\includegraphics[width=1.8in]{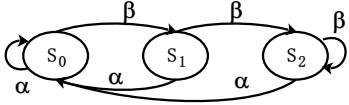}
\caption{\textbf{The state transition of the delay attack on pipelined HotStuff with broadcasting QCs.}}
\vspace{-2mm}
\label{fig:latencyQC}
\end{figure}

\subsection{Longest Chain Rule}
Our second countermeasure is changing the block proposing rule. 
In pipelined HotStuff, an honest node always extends the newest certified block. 
This deterministic block proposing rule enables the adversary to override at most two previous honest blocks by a higher certified block without any loss (i.e., decreasing chain growth rate and chain quality) and to break the $3$-direct chain (i.e., increasing latency).
Therefore, we suggest that nodes can choose to  extend the longest certified blockchain
In particular, when there are two forking branches with the same length, they randomly choose one to extend. This randomized block proposing rule is inspired by the longest chain rule in NC. 
A detailed analysis of this countermeasure will be provided in a journal version of this work.



\section{Evaluation} \label{sec:evaluation}
We implement a proof-of-concept of pipelined HotStuff to evaluate its performance in terms of chain growth rate, chain quality, and latency under the forking and delay attacks. 

\subsection{Testnet Setup}
We consider a system of $16$ nodes, and the number of Byzantine nodes is up to $5$. 
For simplicity, nodes are set to have synchronized clocks, and so the protocol proceeds in synchronized rounds\footnote{In a partially synchronous network, nodes can establish a synchronized clock as long as they have clocks with bounded drift~\cite{dwork1988consensus}.}.
We build a full-fledged implementation of pipelined HotStuff using Golang (around $3,600$ LoC). 
We run the simulation on a late 2013 Apple MacBook Pro, 2.7GHz Intel Core i7.
In our experiments, the adversary runs the attack strategies in Sec.~\ref{subsec:attacks}. 
Our simulation results are based on an average of $10$ runs, where each run generates 100,000 blocks.

\begin{figure}[t]
\centering      
\subfigure[The chain growth rate under the forking attack.]{      
\begin{minipage}{4cm}
\centering                                                          \includegraphics[width=1.8in]{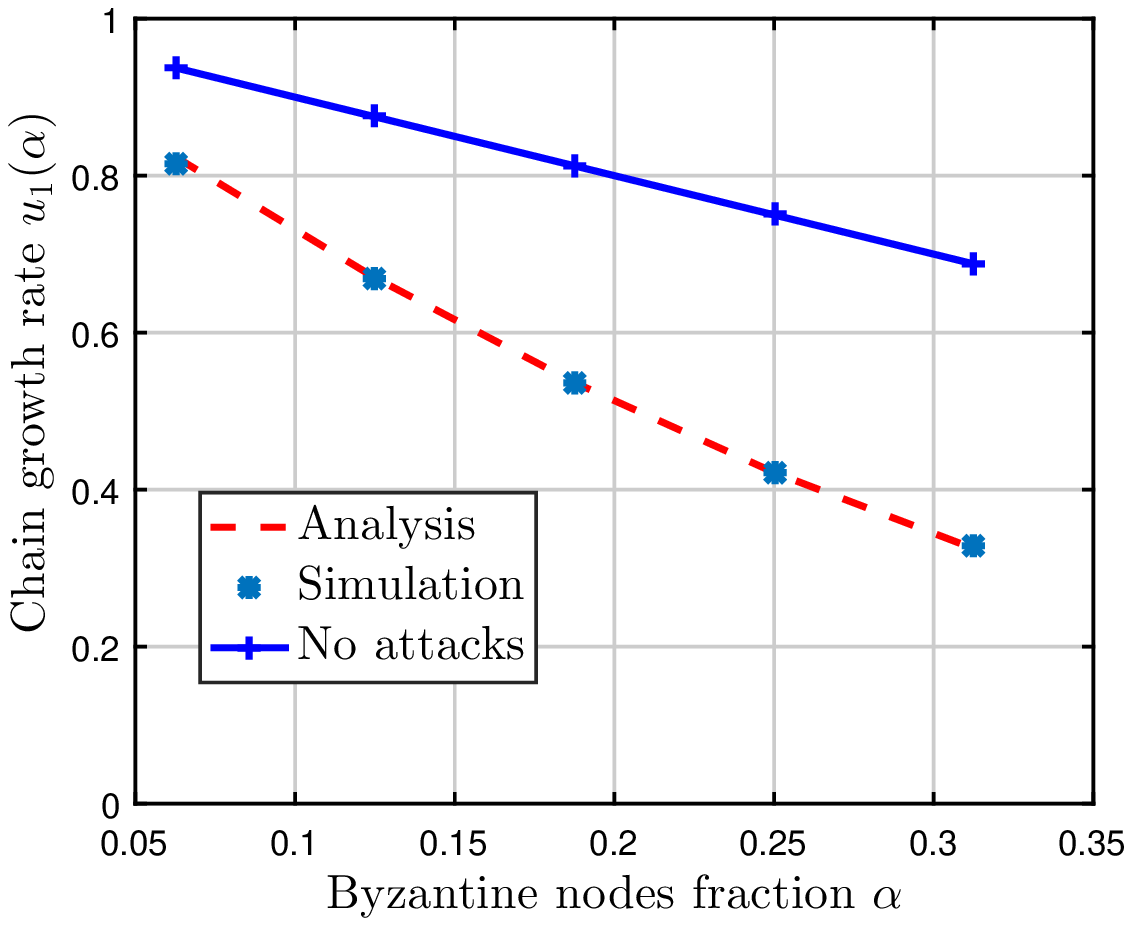}   
\label{fig:growthQualityA}   
\end{minipage}
}
\subfigure[The chain quality under the forking attack.]{    
\begin{minipage}{4cm}
\centering                                                          \includegraphics[width=1.85in]{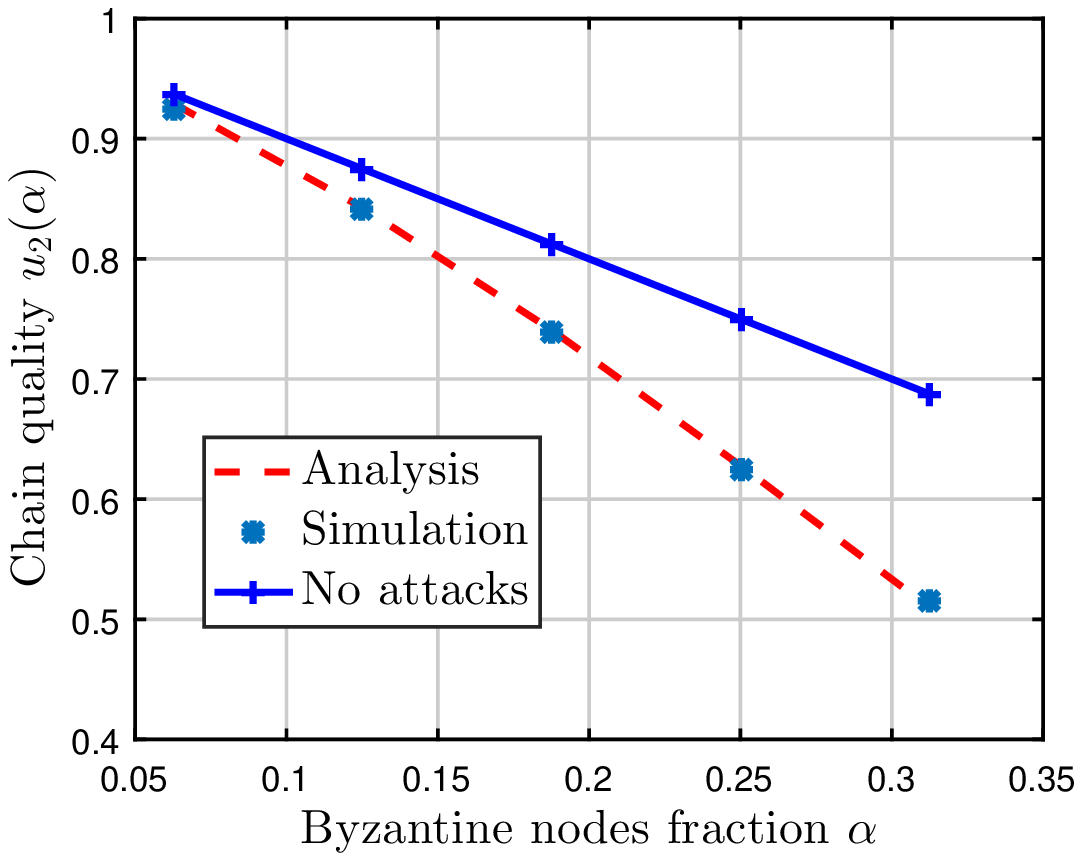}   
\label{fig:growthQualityB} 
\end{minipage}
}
\subfigure[The average latency of honest blocks under the delay attack.]{    
\begin{minipage}{8cm}
\centering                                                          \includegraphics[width=1.9in]{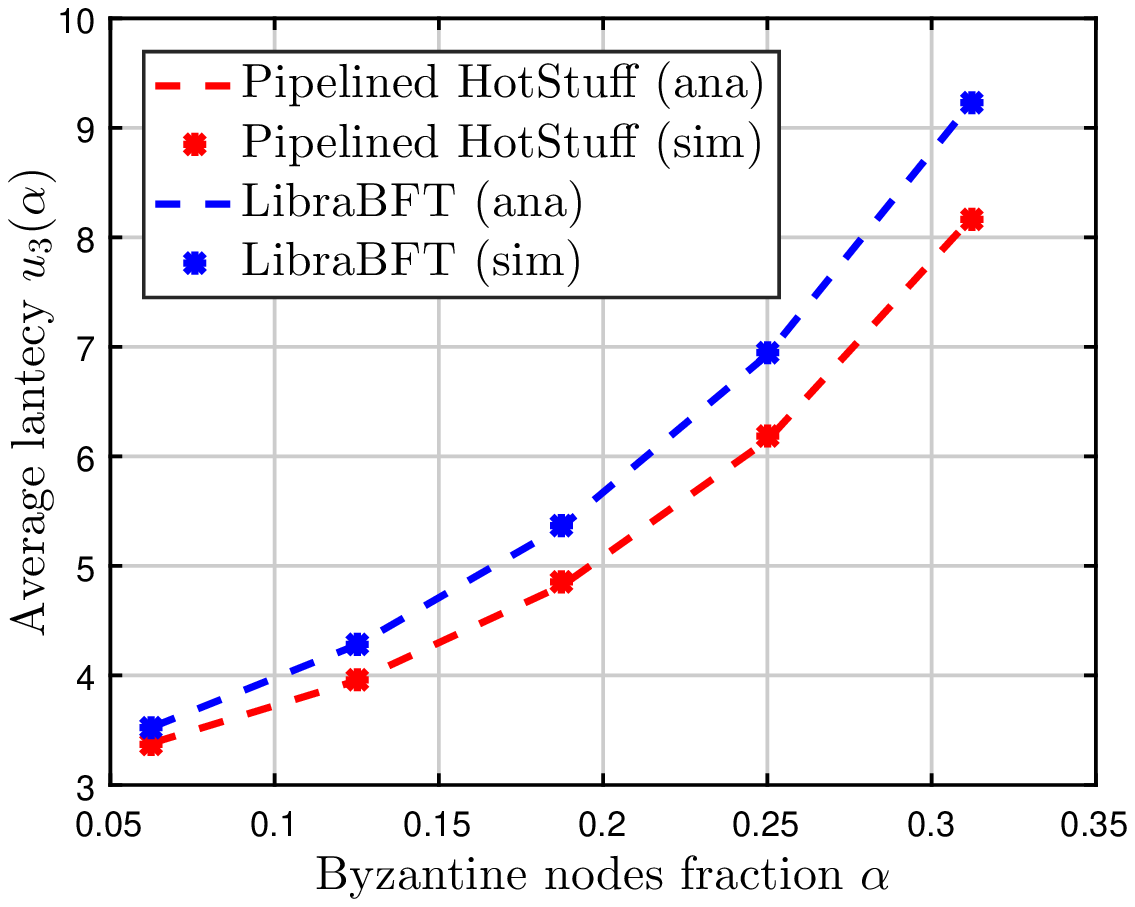}   
\label{fig:growthQualityC}   
\end{minipage}
}
\caption{The performance of pipelined HotStuff as well as LibraBFT under the forking and delay attacks.}  
\label{fig:growthQuality}                                                        
\end{figure}

\subsection{Pipelined HotStuff and LibraBFT}
We evaluate the performance of pipelined HotStuff as well as LibraBFT through extensive experiments. Since LibraBFT has the same chain growth rate and chain quality as pipelined HotStuff, these two performance metrics are only given for pipelined HotStuff.

\subsubsection{Chain Growth Rate}
Fig.~\ref{fig:growthQualityA} shows the chain growth rate of pipelined HotStuff with different fractions of Byzantine nodes. First, we observe that the simulation results well match the analysis results. Second, the results show that as the fraction $\alpha$ of Byzantine nodes increases, the gap between the chain growth rates with and without the forking attack also increases. When $\alpha$ is close to $0.3$, the chain growth rate under the attack can drops to almost half of that without attacks. 

\subsubsection{Chain Quality}
Fig.~\ref{fig:growthQualityB} shows the chain quality of pipelined HotStuff with different fractions of Byzantine nodes.
First, the evaluation results match our previous analysis.
Second, The results show that by the forking attack, the adversary can lower the chain quality and obtain a higher fraction of blocks in the main chain than what it deserves. 
If each block in the main chain brings to its owner a reward, this implies that the adversary can always gain a higher fraction of rewards.
In other words, the incentive compatibility cannot be held anymore under the attack.
For example, when $\alpha$ is close to $0.3$, the chain quality drops to $0.51$. 
This implies that $0.3$ of Byzantine nodes can produce almost half of the blocks in the main chain (and obtain half of the rewards). 

\subsubsection{Latency}
Fig.~\ref{fig:growthQualityC} shows the average latency of honest blocks in the main chain in pipelined Hotstuff and LibraBFT. 
First, the evaluation results, once again, validate our analysis. 
Second, the results show that as the fraction $\alpha$ of Byzantine nodes increases, both the latency in pipelined Hotstuff and LibraBFT increase. 
In addition, the latency in LibraBFT is larger than that in pipelined HotStuff, which implies that the engineering optimizations adopted by LibraBFT may make it more vulnerable to the delay attack. 
Note that in both pipelined HotStuff and LibraBFT, the latency for committing one block is three rounds without attacks. 
Therefore, when $\alpha$ is close to $0.3$, the average latency of honest blocks under the delay attacks is almost $3$x of that without attacks.

\subsection{Countermeasures}
We evaluate the performance of pipelined HotStuff with broadcasting QCs.  
Fig.~\ref{fig:growthQualityQCA} and~\ref{fig:growthQualityQCB} show that as the fraction $\alpha$ of Byzantine nodes increases, the gap between the chain growth rates (and chain qualities) of pipelined HotStuff with and without broadcasting QCs also increases.
This implies that the higher $\alpha$ is, the higher performance improvement that broadcasting QCs brings.
Fig.~\ref{fig:growthQualityQCC} shows that the latency of pipelined HotStuff with broadcasting QCs is at least one round shorter than the original HotStuff. 
In addition, when $\alpha$ is close to $0.3$, the average latency of honest blocks can drop by almost $3$ rounds.
As previously explained, broadcasting QCs also brings additional delay. Therefore, it is a design tradeoff that should be evaluated in \emph{real settings}. 
Finally, we would like to point out that the longest chain rule can also significantly enhance the performance of pipelined HotStuff. 
Moreover, the longest chain rule brings no additional overhead, and it can be combined with broadcasting QCs. We will present these results in a journal version of this work.

\begin{figure}[t]
\centering      
\subfigure[The chain growth rate under the forking attack.]{      
\begin{minipage}{4cm}
\centering                                                          \includegraphics[width=1.8in]{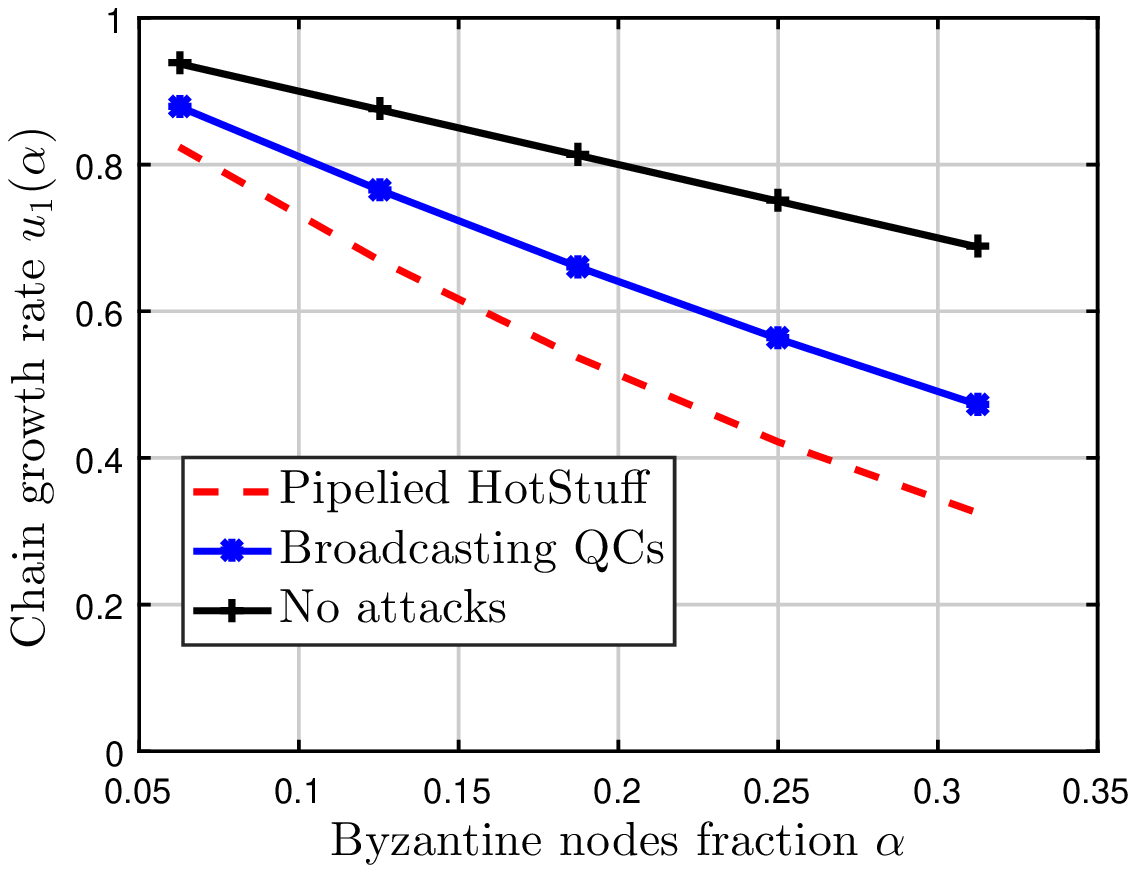}   
\label{fig:growthQualityQCA}   
\end{minipage}
}
\subfigure[The chain quality under the forking attack.]{    
\begin{minipage}{4cm}
\centering                                                          \includegraphics[width=1.8in]{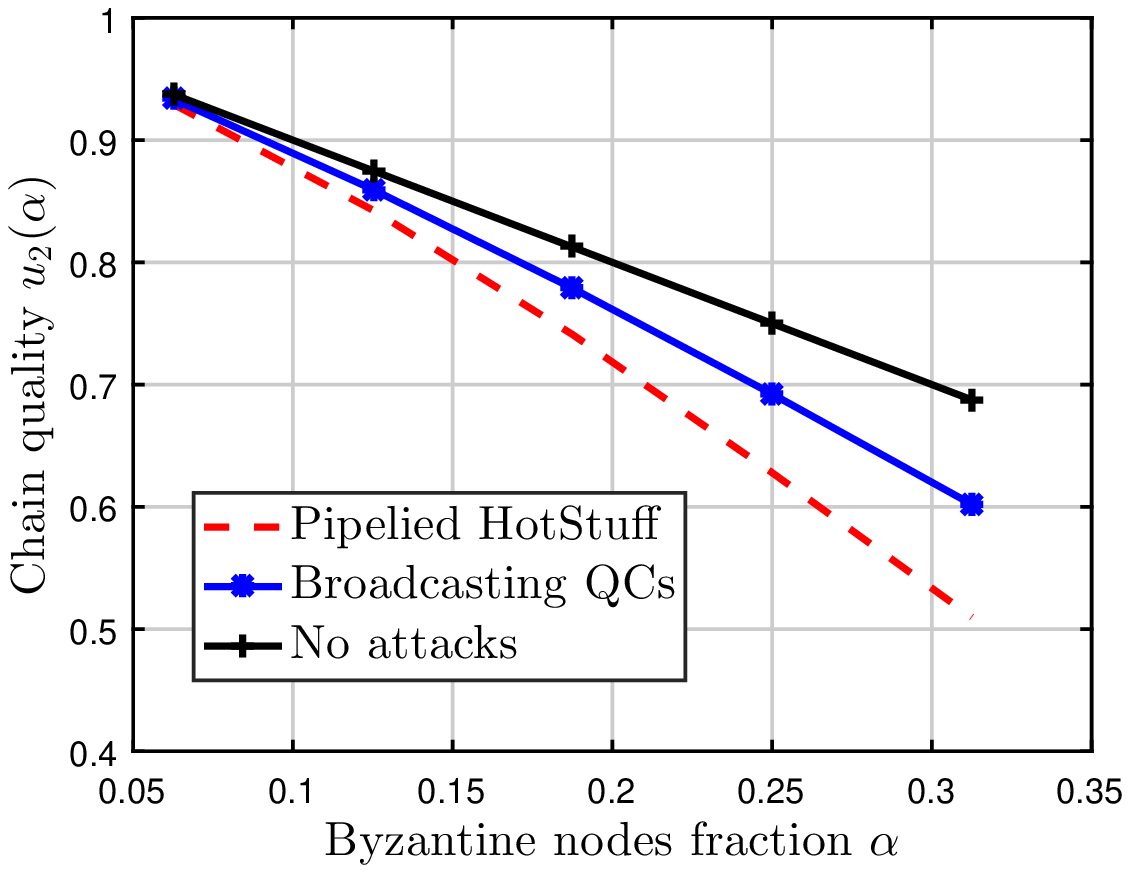}   
\label{fig:growthQualityQCB} 
\end{minipage}
}
\subfigure[The average latency under the delay attack.]{    
\begin{minipage}{8cm}
\centering                                                          \includegraphics[width=1.8in]{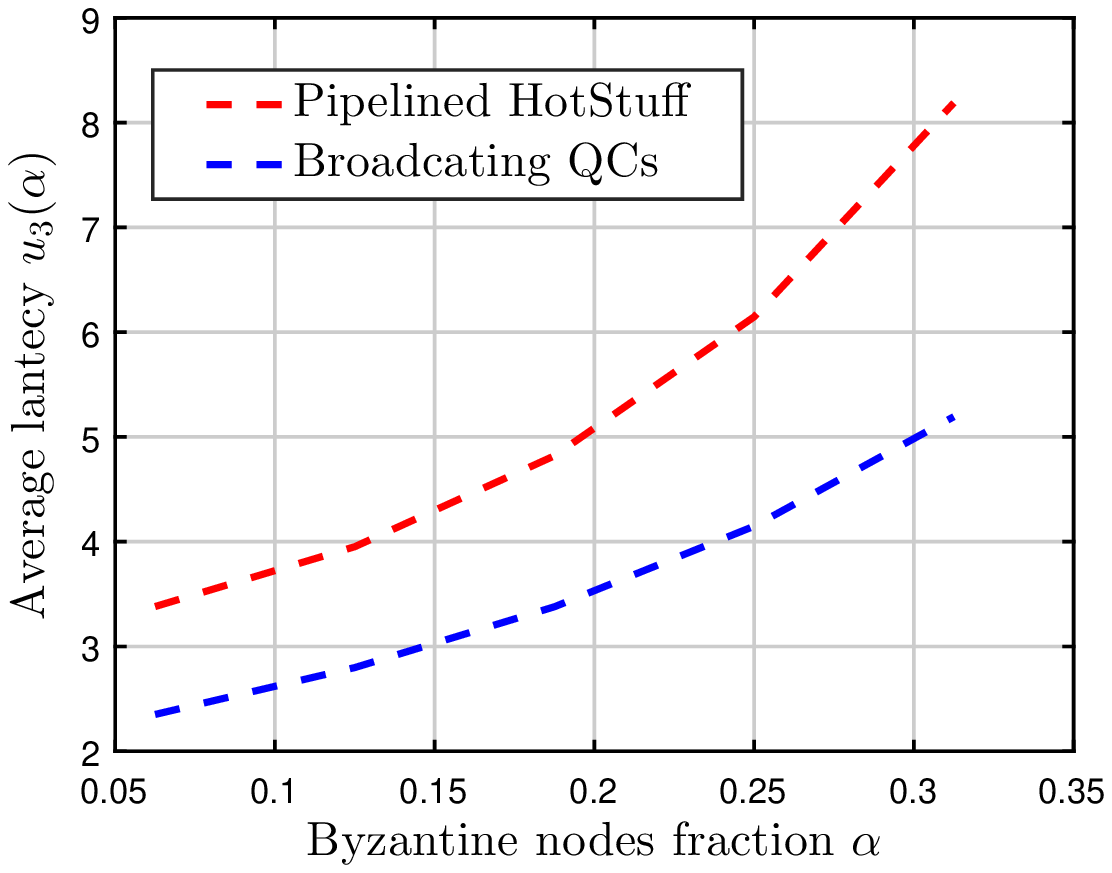}   
\label{fig:growthQualityQCC}   
\end{minipage}
}
\caption{The performance of pipelined HotStuff under the forking and delay attacks with and without broadcasting QCs.}  
\label{fig:growthQualityQC}                                                        
\end{figure}

\section{Related Work} \label{sec:related}
Reaching consensus in face of Byzantine failures was formulated as the Byzantine agreement problem by Lamport \etal \cite{lamport82byzantinegeneral}, and has been studied for several decades. 
Various BFT consensus protocols such as PBFT~\cite{pbft1999}, Zyzzyva~\cite{kotla2007zyzzyva}, and Q/U~\cite{QU} have been proposed.
However, these classical BFT protocols suffer from poor scalability, notorious complexity and leader fairness issues (see Sec.~\ref{sec:intro}) and are hard to be used in large-scale blockchains.
To address these issues, several state-of-the-art BFT protocols~\cite{buchman2016tendermint, casper, HotStuffYin2019, chan2018pala} are proposed for building large-scale blockchains. 

\noindent \textbf{Tendermint.} Tendermint~\cite{buchman2016tendermint} features a continuous leader rotation (also called the democracy-favoring leader rotation~\cite{chan2018pala}) based on PBFT protocol.
Specifically, Tendermint embeds the round-change mechanism into the common-case pattern, and the leader is re-elected from all the nodes by some desired policy after every block, resulting in better leadership fairness.

\noindent \textbf{Casper FFG}. Buterin and Griffith~\cite{casper} proposed a protocol called Casper FFG, which works as an overlay atop NC to provide ``finality gadget''.
Casper FFG applies an elegant pipelining idea to the classical BFT protocol, i.e., if each block required two rounds of voting, one can piggyback the second round on the next block’s voting. 
This pipelining idea enables the system to have one identical round (rather than multiple rounds with different functionalities and names\footnote{In PBFT, for committing one proposal, there are two phases: prepare and commit phases, and each phase has different functionalities.}), and so significantly simplifies the protocol design. 

\noindent \textbf{Pala and Streamlet.} Pala~\cite{chan2018pala} is a simple BFT consensus protocol that also adopts the pipelining idea. However, for high throughput, it uses a stability-favoring leader rotation policy. Based on this work, Chan \etal~\cite{Chan2020StreamletTS} proposed Streamlet, which further simplifies the voting rule. Streamlet aims to provide a unified, simple protocol for both teaching and implementation.

\noindent \textbf{HotStuff.} HotStuff proposed by Yin \etal~\cite{HotStuffYin2019} creatively adopts a three-phase commit rule (rather than the two-phase commit rule used in Casper FFG, Pala, and Streamlet)  to enable the protocol to reach consensus at the pace of actual network delay. In addition, HotStuff adopts the threshold signature to realize linear message complexity, and can also be pipelined into a practical protocol for building large-scale blockchains.

\noindent \textbf{Fast-HotStuff.} Fast-HotStuff \cite{fastHotStuff} has lower latency compared to the HotStuff and is resilient to a forking attack. But unlike HotStuff, Fast-HotStuff adds a small overhead to the block during an unhappy path (when the primary fails).

\section{Conclusion} \label{sec:conclusion}
The state-of-the-art pipelined HotStuff not only provides linear message complexity and responsiveness but also is efficient for building large-scale blockchains. 
Thus, pipelined HotStuff has been adopted in many blockchain projects such as Libra, Flow, and Cypherium. 
In this paper, we propose a multi-metric evaluation framework including chain growth rate, chain quality, and latency. 
We also propose two attacks, namely the forking attack and delay attack, and systematically study the impacts of these two attacks on the performance of pipelined HotStuff. 
Also, we leverage the framework to evaluate some engineering designs in LibraBFT. 
Finally, we propose some countermeasures to enhance the performance of pipelined HotStuff against these attacks. 
We hope that our framework can contribute to proposing new variants of HotStuff as well as making HotStuff more understandable for developers and practitioners in terms of performance.
\bibliographystyle{IEEEtran}
\bibliography{short}

\appendix
\subsection{Concentration Bounds}
We denote the probability of an event $E$ by $\Pr[E]$ and the expected value of a random variable $X$ by $\e{X}$. We will use the following bounds in our proofs.

\begin{lemma} [Chernoff bound for dependent random variables \cite{niu2019analysis}] \label{lem:key_step}
Let $T$ be a positive integer. Let $X^{(j)} = \sum_{i = 0}^{n-1} X_{j + iT}$ be the sum of $n$ independent indicator random variables and $\mu_j = \e{ X^{(j)} }$ for $j \in \{1, \ldots, T\}$. Let $X = X^{(1)} + \cdots + X^{(T)}$. Let $\mu = \min_j \{ \mu_j \}$. Then, for $0 < \delta < 1$, $\Pr\left[ X \le (1 - \delta) \mu T \right] \le e^{-\delta^2 \mu / 2}$ and $\Pr\left[ X \le (1 + \delta) \mu T \right] \le e^{-\delta^2 \mu / 3}$.
\end{lemma}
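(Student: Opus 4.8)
The plan is to reduce the tail bound for the dependent sum $X$ to classical (independent) Chernoff bounds applied separately to each of the $T$ blocks $X^{(1)}, \ldots, X^{(T)}$, and then to stitch these blocks together by a purely deterministic event-inclusion argument. The crucial observation is that we never need the blocks $X^{(1)}, \ldots, X^{(T)}$ to be jointly independent of one another (in the intended application they are not, since neighbouring index classes share summands); we only use that \emph{within} each block the $n$ indicators $X_{j}, X_{j+T}, \ldots, X_{j+(n-1)T}$ are independent.

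First I would record the per-block bound. Each $X^{(j)}$ is a sum of $n$ independent indicators with mean $\mu_j$, so the standard multiplicative Chernoff inequality applies: for $0 < \delta < 1$,
\[
\Pr\!\left[ X^{(j)} \le (1-\delta)\mu_j \right] \le e^{-\delta^2 \mu_j / 2}, \qquad \Pr\!\left[ X^{(j)} \ge (1+\delta)\mu_j \right] \le e^{-\delta^2 \mu_j / 3}.
\]
These follow in the usual way: bound the moment generating function $\mathbb{E}[e^{\pm t X^{(j)}}] = \prod_i \mathbb{E}[e^{\pm t X_{j+iT}}]$ using independence and $1+x \le e^{x}$, apply Markov's inequality to $e^{-tX^{(j)}}$ (lower tail) or $e^{tX^{(j)}}$ (upper tail), and optimise over $t>0$. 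Since $\mu_j \ge \mu$ and the exponents are monotone in $\mu_j$, each right-hand side is at most $e^{-\delta^2 \mu/2}$ (resp.\ $e^{-\delta^2\mu/3}$).

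The second step is the deterministic inclusion that absorbs the cross-block dependence. If every block exceeded its threshold, namely $X^{(j)} > (1-\delta)\mu$ for all $j$, then summing gives $X = \sum_{j=1}^{T} X^{(j)} > (1-\delta)\mu T$; contrapositively, $\{ X \le (1-\delta)\mu T \} \subseteq \bigcup_{j=1}^{T} \{ X^{(j)} \le (1-\delta)\mu \}$, and the analogous inclusion $\{ X \ge (1+\delta)\mu T \} \subseteq \bigcup_{j=1}^{T} \{ X^{(j)} \ge (1+\delta)\mu \}$ holds for the upper tail. A union bound then gives $\Pr[X \le (1-\delta)\mu T] \le \sum_{j} \Pr[X^{(j)} \le (1-\delta)\mu] \le e^{-\delta^2\mu/2}$, and likewise for the upper tail; the constant block count $T$ is suppressed in the statement, which is harmless in every application, where $T$ is fixed while $\mu \to \infty$, so that $T\,e^{-\delta^2\mu/c} = e^{-\Omega(\delta^2\mu)}$.

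The main obstacle is conceptual rather than computational: one must resist bounding the moment generating function of $X$ directly, which fails precisely because the blocks are dependent. The point of the lemma is that the deterministic inclusion decouples the problem into $T$ genuinely independent sub-problems, each of which is classical. One technical point worth flagging is the upper tail: taking $\mu = \min_j \mu_j$ is only tight when the block means are (approximately) equal — which is exactly the balanced case $\mu_1 = \cdots = \mu_T$ arising in Lemma~\ref{lem:keylemma1}, where $\Pr[X^{(j)} \ge (1+\delta)\mu] = \Pr[X^{(j)} \ge (1+\delta)\mu_j]$ — so in the general statement the upper-tail bound should be read with this balanced regime in mind.
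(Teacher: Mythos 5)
Your reduction is sound as far as it goes: the per-block Chernoff bounds are correct, the deterministic inclusion $\{ X \le (1-\delta)\mu T \} \subseteq \bigcup_{j=1}^{T} \{ X^{(j)} \le (1-\delta)\mu \}$ is valid, and the monotonicity step $\Pr[X^{(j)} \le (1-\delta)\mu] \le \Pr[X^{(j)} \le (1-\delta)\mu_j]$ is legitimate for the lower tail. But the union bound then yields $\Pr[X \le (1-\delta)\mu T] \le T\,e^{-\delta^2\mu/2}$, which is strictly weaker than the stated bound $e^{-\delta^2\mu/2}$; your closing remark that the factor $T$ ``is suppressed in the statement'' concedes that you have not proved the lemma as written, and the lemma places no constraint on $T$ relative to $\mu$. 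The factor is not removable by this route. The missing idea — and, since the paper imports the lemma from the cited reference without proof, the idea used there — is to bound the moment generating function of $X$ \emph{directly} and decouple the blocks with the generalized H\"older inequality:
\begin{equation} \nonumber
\e{e^{-tX}} \;=\; \e{ \textstyle\prod_{j=1}^{T} e^{-tX^{(j)}} } \;\le\; \prod_{j=1}^{T} \e{ e^{-tT X^{(j)}} }^{1/T},
\end{equation}
after which each factor splits by \emph{within-block} independence, is bounded by $\exp\bigl(\mu_j(e^{-tT}-1)\bigr)$, and the negativity of $e^{-tT}-1$ lets $\frac{1}{T}\sum_j \mu_j \ge \mu$ enter; optimizing over $t$ gives $e^{-\mu[(1-\delta)\ln(1-\delta)+\delta]} \le e^{-\delta^2\mu/2}$ with no $T$ dependence. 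So your instinct to ``resist bounding the moment generating function of $X$ directly'' resists exactly the right move: H\"older is the standard device for absorbing the cross-block dependence without paying a union-bound factor. (For this paper's uses, e.g.\ Lemma~\ref{lem:keylemma1} with $T=3$ and conclusions of the form $e^{-\Omega(\delta^2\beta^3 m)}$, your weaker bound would in fact suffice — but it is not the statement.)

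Two of your side observations are correct and worth stating plainly. First, the second inequality in the statement is a typo ($\le$ inside the probability should be $\ge$), which you silently repair. Second, your caveat about the upper tail is a genuine defect of the statement, not merely a tightness issue: with $\mu = \min_j \mu_j$ the upper-tail claim is false for unbalanced means (take $T=2$, $\mu_1 = 1$, $\mu_2$ large; then $X \ge X^{(2)}$ exceeds $(1+\delta)\mu T$ with probability close to one). The H\"older argument makes the reason visible — for the upper tail the exponent involves the \emph{arithmetic mean} of the $\mu_j$, which must be bounded by $\max_j \mu_j$ rather than the minimum — so the upper-tail bound should be stated with $\max_j \mu_j$ (or for equal means, which is the only case the paper uses, where $\mu_j = \beta^3 m/3$ for every $j$).
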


\subsection{Latency Analysis}
\subsubsection{Pipelined HotStuff} \label{subsec:HotLatency}
Based on the defined states in Sec.\ref{subsec:latency}, we introduce an additional state $S_x$, where the newest certified block together with its predecessor blocks form a $3$-chain structure.
By the aforementioned commit rule, when nodes observer a $3$-chain structure, the first block in the $3$-chain structure together with its predecessor blocks are all committed (see Sec.~\ref{subsec:hotstuffalgorithm}).
This further means when an honest block is kept in the main chain, and its descendant blocks first form a $3$-chain structure, it will be committed. 

By slightly modified the Markov model in Fig.~\ref{fig:systemTransit}, we can show the state transitions for a node to observe some new committed blocks (i.e., entering state $S_x$) in Fig.~\ref{fig:latencyHotStuff}. Furthermore, we can compute the expected rounds for a state $S_{i}$ ($i \in \{0, 1, 2, 3\}$) to first enter state $S_x$. To realize this, we introduce a new variable $X_i$, which denotes the number of rounds that starting from state $S_i$, the state first hits state $S_x$. In addition, we use 
$\e{X_i}$ to denote the expectation of $X_i$. With the state transitions, we can have the following equation.

\begin{figure}[t]
\centering
\includegraphics[width=2.3in]{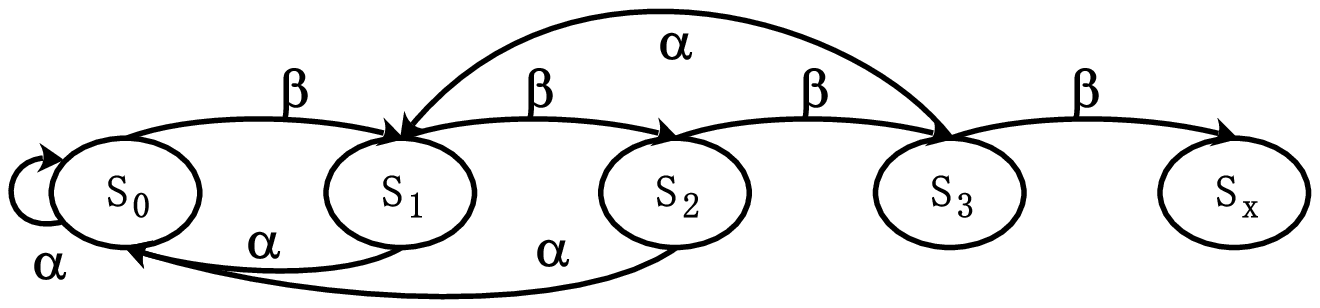}
\caption{\textbf{The state transition of the delay attack on pipelined HotStuff.}}
\vspace{-2mm}
\label{fig:latencyHotStuff}
\end{figure}

\begin{equation} \label{eq:delayHot}
\begin{split}
       \e{X_0} &= \alpha \e{X_{0}} + \beta \e{X_1} + 1 \\
       \e{X_1} &= \alpha \e{X_{0}} + \beta \e{X_2} + 1\\
       \e{X_2} &= \alpha \e{X_{0}} + \beta \e{X_3} + 1\\
       \e{X_3} &= \alpha \e{X_1} + 1\\
\end{split}
\end{equation}
By solving the equation~\eqref{eq:delayHot}, we can get:
\begin{equation} \nonumber
\centering
\begin{split}
        \e{X_0} = \frac{2\beta^3 + \beta + 1}{\beta^4}, \quad \e{X_1} = \frac{\beta^3 + \beta + 1}{\beta^4}, \\
 \e{X_2} = \frac{(1+\beta) (\beta^2 - \beta + 1)}{\beta^4}, \quad \e{X_3} = \frac{\beta^3 - \beta^2 + 1}{\beta^4}.\\
\end{split}
\end{equation}

Next, we can track honest blocks kept in the main chain and their associated delay. 
In addition, we refer to these tracked honest block as target blocks. 
As we said previously, the chance that an honest block is kept in the main chain and its delay is affected by the delay attack strategies, which further depend on the chain structure. 
Hence, according to its predecessor blocks structure, we can divide honest blocks into the following three cases:
\begin{itemize}[leftmargin=*]
\setlength{\itemsep}{0pt}
\setlength{\topsep}{0pt}
\setlength{\partopsep}{0pt} 
    \item \emph{Case $a$: $S_0 \xrightarrow{\beta} S_1$.} This state transition denotes that a target block is proposed after a previous timeout round. By the delay attack strategies in Sec.~\ref{subsec:delayattack}, the honest block will always be kept in the main chain. 
    Further, as the current system state is $S_1$, the delay for the target block to be committed is $\e{X_1}$.
    
    \item \emph{Case $b$: $S_{1} \xrightarrow{\beta} S_2$.} This state transition denotes that a target block together with its parent block can only form two consecutive blocks.
    According to the subsequent leaders (and blocks), the delay for this target block can be divided into three subcases:
    
    {\bf \emph{Subcase 1}}: The next leader is adversarial and proposes no block. This subcase happens with probability $\alpha$. By then, the system state is $S_0$, and the average delay for this target block to be committed is $\e{X_0} + 1$.
 
    {\bf \emph{Subcase 2}}: Some honest leader propose the next block, and then this block is overridden by a block of the subsequent adversarial leader. This subcase happens with probability $\beta \alpha$. By then, as the state is $S_1$, the average delay for this target block to be committed is $\e{X_1} + 2$.
    
    {\bf \emph{Subcase 3}}: Some honest leader propose the next two blocks. This subcase happens with probability $\beta^2$. As the state is $S_3$, the average delay for this target block to be committed is $\e{X_3} + 2$.

    \item \emph{Case $c$: $S_{2} \xrightarrow{\beta}S_3$ and $S_{3} \xrightarrow{\beta}S_4$.} This state transition denotes that a target block together with its predecessor blocks can form three consecutive blocks. According to the subsequent leaders (and blocks), the delay for this target block can be divided into two subcases:
    
     {\bf \emph{Subcase 1}}: Some honest leader propose a block, and then this new block is overridden by the next adversarial block. This subcase happens with probability $\beta \alpha$. As the system state is $S_1$, the average delay for this target block to be committed is $\e{X_1} + 2$.
    
    {\bf \emph{Subcase 2}}: Some honest leader propose two consecutive block. This subcase happens with probability $\beta^2$. As the system state is $S_3$, the average delay for this target block to be committed is $\e{X_3} + 2$.
\end{itemize}

\subsubsection{LibraBFT} \label{subsec:LibraLatency}
\begin{figure}[t]
\centering
\includegraphics[width=2.3in]{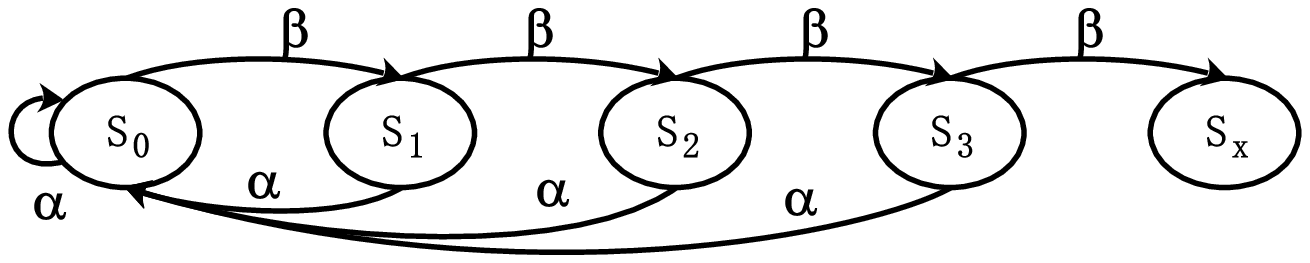}
\caption{\textbf{The state transition of the delay attack on pipelined HotStuff.}}
\vspace{-2mm}
\label{fig:latencyLibraBFT}
\end{figure}

By following a similar analysis of pipelined HotStuff, we first can show the state transitions for the system to enter the next state $S_x$ in Fig.~\ref{fig:latencyLibraBFT}. Furthermore, we can compute the expected rounds for a state $S_{i}$ ($i \in \{0, 1, 2, 3\}$) to first enter state $S_x$ and have the following equation.

\begin{equation} \label{eq:delayLibra}
\begin{split}
       \e{X_0} &= \alpha \e{X_{0}} + \beta \e{X_1} + 1 \\
       \e{X_1} &= \alpha \e{X_{0}} + \beta \e{X_2} + 1\\
       \e{X_2} &= \alpha \e{X_{0}} + \beta \e{X_3} + 1\\
       \e{X_3} &= \alpha \e{X_0} + 1\\
\end{split}
\end{equation}
By solving the equation~\eqref{eq:delayLibra}, we can get:
\begin{equation} \nonumber
\begin{split}
        \e{X_0} &= \frac{\beta^3 + \beta^2 + \beta + 1}{\beta^4}, \quad \e{X_1} = \frac{\beta^2 + \beta + 1}{\beta^4}, \\
 \e{X_2} &= \frac{\beta + 1}{\beta^4}, \quad \e{X_3} = \frac{1}{\beta^4} \\
\end{split}
\end{equation}

Next, we can track honest blocks kept in the main chain and their associated delay. According to its predecessor blocks structure, we can divide honest blocks into two cases as followings.
\begin{itemize}[leftmargin=*]
\setlength{\itemsep}{0pt}
\setlength{\topsep}{0pt}
\setlength{\partopsep}{0pt} 
    \item \emph{Case $a$: $S_0 \xrightarrow{\beta} S_1$ and $S_{1} \xrightarrow{\beta} S_2$.} By the delay attack strategies in Sec.~\ref{subsec:delayattack}, the honest block will always be kept in the main chain. Further, as the current system state is $S_1$, the delay for the honest block to be committed is $\e{X_1}$.
   
    \item \emph{Case $b$: $S_{2} \xrightarrow{\beta}S_3$ and $S_{3} \xrightarrow{\beta}S_4$.} This state transition denotes that an honest block together with its predecessor blocks can only form three consecutive blocks. According to the subsequent leaders (and blocks), the delay for this target block can be divided into two subcases:
    
     {\bf \emph{Subcase 1}}: Some honest leader propose a block, and then the subsequent adversarial leader hides the QC and proposes no block. This subcase happens with probability $\beta \alpha$. As the system state is $S_0$, the average delay for this target block to be committed is $\e{X_0} + 2$.
    
    {\bf \emph{Subcase 2}}: Some honest leader propose two consecutive block. This subcase happens with probability $\beta^2$. As the system state is $S_3$, the average delay for this target block to be committed is $\e{X_3} + 2$.
    
\end{itemize}

\begin{figure}[t]
\centering
\includegraphics[width=2.3in]{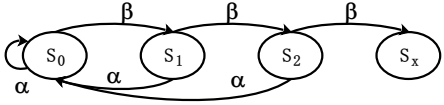}
\caption{\textbf{The state transition of the delay attack on pipelined HotStuff.}}
\vspace{-2mm}
\label{fig:latencyBroadcastingQC}
\end{figure}

\subsubsection{Broadcasting QC} \label{subsec:QCLatency}
By following the previous analysis, we first can show the state transitions for the system to enter the next state $S_x$ in Fig.~\ref{fig:latencyBroadcastingQC}. Furthermore, we can compute the expected rounds for a state $S_{i}$ ($i \in \{0, 1, 2, 3\}$) to first enter state $S_x$, and have the following equation.
\begin{equation} \label{eq:delay}
\begin{split}
       \e{X_0} &= \alpha \e{X_{0}} + \beta \e{X_1} + 1 \\
       \e{X_1} &= \alpha \e{X_{0}} + \beta \e{X_2} + 1\\
       \e{X_2} &= \alpha \e{X_{0}} + 1\\
\end{split}
\end{equation}
By solving the equation~\eqref{eq:delay}, we can get:
\begin{equation} \nonumber
\begin{split}
        \e{X_0} = \frac{\beta^2 + \beta + 1}{\beta^3},\quad \e{X_1} = \frac{\beta + 1}{\beta^3},\quad \e{X_2} = \frac{1}{\beta^3}.
\end{split}
\end{equation}
Next, we can track honest blocks kept in the main chain and their associated delay. Note that all honest blocks will be kept in the main chain by the delay attack strategies in pipelined HotStuff with broadcasting QCs. Furthermore, as the system state is $S_1$, the delay for the honest block to be committed is $\e{X_1}$.
\end{document}